\definecolor{hypercolor}{rgb}{0,0.2,0.7}
\newif\iffancyfont%
  \DeclareMathAlphabet{\mathbfit}{OML}{\rmdefault}{bx}{it}
  \DeclareMathAlphabet{\mathsfit}{OML}{\sfdefault}{m}{it}
  \SetMathAlphabet{\mathsfit}{bold}{OML}{\sfdefault}{bx}{it}
  \DeclareMathAlphabet{\mathsfbfit}{OML}{\sfdefault}{bx}{it}
  \DeclareRobustCommand{\defn}{\mathrel{{\vdotdot}{\equal}}}
  \renewcommand{\sfdefault}{phv}
  \newcommand{\defn}{\coloneqq}
  \let\uppi\piup%
  \let\upDelta\Delta%
\setlist[1]{labelindent=\parindent}
\setlist[description]{font=\sffamily\bfseries,align=right,labelsep=1em}
\numberwithin{equation}{section}
\newcounter{and}
\newdimen{\instindent}
\newcommand{\institute}[1]{\newcommand{\@institute}{#1}}
\newcommand{\inst}[1]{\unskip\smash{$^{#1}$}\setcounter{and}{1}\ignorespaces}
\newcommand{\email}[1]{\href{mailto:#1}{#1}}
\renewcommand{\maketitle}{
  {
    \raggedright%
    \LARGE%
    \noindent%
    \bfseries%
    \sffamily%
    \@title%
    \par
  }

  \vspace{1.5\baselineskip}

  {
    \raggedright%
    \renewcommand{\and}{\unskip, \ignorespaces}%
    \noindent\ignorespaces\@author\par
  }

  \vspace{0.5\baselineskip}

  {
    \small%
    \parindent=0pt%
    \parskip=0pt%
    \setcounter{and}{1}%
    \renewcommand{\and}{%
      \par\stepcounter{and}%
      \hangindent\instindent%
      \noindent%
      \hbox to \instindent{\hss\smash{$^{\theand}$\enspace}}\ignorespaces%
    }%
    \setbox0=\vbox{\@institute}%
    \ifnum\value{and}>9\relax\setbox0=\hbox{$^{88}$\enspace}%
    \else\setbox0=\hbox{$^{8}$\enspace}\fi%
    \instindent=\wd0\relax%
    \ifnum\value{and}=1\relax%
    \else%
      \setcounter{and}{1}%
      \hangindent\instindent%
      \noindent%
      \hbox to \instindent{\hss\smash{$^{\theand}$}\enspace}\ignorespaces%
    \fi%
    \ignorespaces%
    \@institute\par
  }
}
\renewenvironment{abstract}{
  \addvspace{1.5\baselineskip}%
  \topsep=0pt\partopsep=0pt%
  \trivlist\item[\hspace{\labelsep}\bfseries\sffamily Abstract.]
}{}
\newenvironment{acknowledgments}{
  \addvspace{1.5\baselineskip}%
  \topsep=0pt\partopsep=0pt%
  \trivlist\item[\hspace{\labelsep}\bfseries\sffamily Acknowledgments.]
}{}
\newcommand{\ie}{\textit{i.e.}}
\newcommand{\eg}{\textit{e.g.}}
\newcommand{\viz}{\textit{viz.}}
\newcommand{\dif}{\mathrm{d}}
\newcommand{\e}{\mathrm{e}}
\newcommand{\im}{\mathrm{i}}
\newcommand{\field}[1][K]{{\mathds{#1}}}
\newcommand{\RR}{{\field[R]}}
\newcommand{\CC}{{\field[C]}}
\renewcommand{\Re}{\operatorname{Re}}
\renewcommand{\Im}{\operatorname{Im}}
\DeclareMathOperator{\spec}{sp}
\DeclareMathOperator{\reso}{rs}
\DeclareMathOperator{\Dom}{Dom}
\DeclareMathOperator{\Ran}{Ran}
\DeclareMathOperator{\Ker}{Ker}
\newcommand{\lapl}{\upDelta}
\newcommand{\comp}{\mathrm{c}}
\newcommand{\slim}{\operatornamewithlimits{s-lim}}
\newcommand{\conj}[1]{\overline{#1}}
\DeclarePairedDelimiter{\abs}{\lvert}{\rvert}
\DeclarePairedDelimiter{\norm}{\lVert}{\rVert}
\DeclarePairedDelimiter{\jnorm}{\langle}{\rangle}
\newcommand{\one}{\mathds{1}}
\newcommand{\cl}{\mathrm{cl}}
\newcommand{\Feyn}{\mathrm{F}}
\newcommand{\aFeyn}{{\overline{\mathrm{F}}}}
\newcommand{\PJ}{\mathrm{PJ}}
\newcommand{\en}{\mathrm{en}}
\newcommand{\retadv}{{\vee\mkern-2mu/\mkern-2mu\wedge}}
\newcommand{\FeynFeyn}{{\Feyn/\mkern1mu\aFeyn}}
\newcommand{\KG}{K}
\newcommand{\aotimes}{\mathbin{\mathop{\otimes}\limits^{
  \vbox to .15ex {\kern-2\ex@\hbox{$\scriptscriptstyle\mathrm{alg}$}\vss}}}}
\newtheoremstyle{nonumberplainnoparens}%
  {\item[\theorem@headerfont\hskip\labelsep ##1\theorem@separator]}%
  {\item[\theorem@headerfont\hskip\labelsep ##1 ##3\theorem@separator]}
\theoremstyle{plain}
\newtheorem{theorem}{Theorem}[section]
\newtheorem{proposition}[theorem]{Proposition}
\newtheorem{lemma}[theorem]{Lemma}
\newtheorem{remark}[theorem]{Remark}
\newtheorem{assumption}[theorem]{Assumption}
\theoremstyle{nonumberplainnoparens}
\newtheorem{proof}{Proof}
\let\qedhere\null
\title{Feynman Propagators on Static Spacetimes}
\author{
  Jan Dereziński\inst{1}
  \and
  Daniel Siemssen\inst{1,2}
}
\institute{
  Department of Mathematical Methods in Physics, Faculty of Physics, University of Warsaw, Pasteura 5, 02-093 Warszawa, Poland.
  E-mail:~\email{jan.derezinski@fuw.edu.pl}.
  \and
  Department of Mathematics and Informatics, University of Wuppertal, Gaußstraße 20, 42119 Wuppertal, Germany.
  E-mail:~\email{siemssen@uni-wuppertal.de}.
}
\begin{document}

\maketitle

\begin{abstract}
  We consider the Klein--Gordon equation on a static spacetime and minimally coupled to a static electromagnetic potential.
  We show that it is essentially self-adjoint on $C^\infty_\comp$.
  We discuss various distinguished inverses and bisolutions of the
  Klein--Gordon operator, focusing on the so-called Feynman propagator.
  We show that the Feynman propagator can be considered the boundary value of the resolvent of the Klein--Gordon operator, in the spirit of the limiting absorption principle known from the theory of Schrödinger operators.
  We also show that the Feynman propagator is the limit of the inverse
  of the Wick rotated Klein--Gordon operator.
\end{abstract}

\section{Introduction}
\label{sec:introduction}

Consider a Lorentzian manifold $(M, g)$, an electromagnetic potential $A$ and a scalar potential $Y$.
We write $\abs{g} = \abs{\det [g_{\mu\nu}]}$ and $D = -\im\partial$.
The \emph{Klein--Gordon operator on $(M, g)$ minimally coupled to $A$ and with a scalar potential $Y$} is given by
\begin{equation*}
  \KG = \Box_A + Y = \abs{g}^{-\frac12} (D_\mu - A_\mu) \abs{g}^\frac12 g^{\mu\nu} (D_\nu - A_\nu) + Y
\end{equation*}
and the \emph{Klein--Gordon equation} is
\begin{equation}\label{eq:kg}
  \KG u = 0.
\end{equation}
We are interested in distinguished inverses and bisolutions of the Klein–Gordon operator~$\KG$.
Our main motivation comes from quantum field theory on a fixed curved background and external classical fields.

Inverses and bisolutions of $\KG$ are operators, which often can be interpreted as operators acting from $C^\infty_\comp(M)$ to $C^\infty(M)$, defined by the following conditions:
\begin{enumerate}
  \item We say that $G$ is a \emph{bisolution} of $\KG$ if it satisfies
    \begin{equation*}
      \KG G f = G \KG f = 0
      \quad\text{for all}\quad
      f \in C^\infty_\comp(M).
    \end{equation*}
  \item We say that $G$ is an \emph{inverse} of $\KG$ if it satisfies
    \begin{equation*}
      \KG G f = G \KG f = f
      \quad\text{for all}\quad
      f \in C^\infty_\comp(M).
    \end{equation*}
\end{enumerate}
The Klein--Gordon equation has many bisolutions and inverses.
They have many names, often not quite consistent.
In physics one often uses the word ``propagator'' or ``two-point function''.
Moreover, inverses are often called ``Green's functions''.
We sometimes use the word ``propagator'' to denote jointly distinguished bisolutions and inverses.
An interesting table comparing conventions for propagators used by various authors can be found at the end of Appendix 2 of \cite{bogoliubov}.

In this article we are interested in \emph{distinguished inverses and bisolutions} of the Klein--Gordon operator on certain static spacetimes.
We remark that it is well understood how to define the distinguished bisolutions and inverses in that case.

Here is a list of basic distinguished bisolutions and inverses in the
static case:
\begin{enumerate}
  \item Distinguished bisolutions:
  \begin{enumerate}
    \item the \emph{Pauli--Jordan bisolution}, also called the causal propagator, the commutator function, etc., denoted~$G^\PJ$;
    \item the \emph{positive frequency bisolution}/two-point function, denoted~$G^{(+)}$;
    \item the \emph{negative frequency bisolution}/two-point function, denoted $G^{(-)}$.
  \end{enumerate}
  \item Distinguished inverses:
  \begin{enumerate}
    \item the \emph{forward/retarded inverse}/propagator, denoted~$G^\vee$;
    \item the \emph{backward/advanced inverse}/propagator, denoted~$G^\wedge$;
    \item the \emph{Feynman inverse}/propagator/two-point function, called the causal Green's function in \cite{bogoliubov}, denoted~$G^\Feyn$;
    \item the \emph{anti-Feynman inverse}/propagator/two-point function, denoted~$G^{\aFeyn}$.
  \end{enumerate}
\end{enumerate}

The Pauli--Jordan, forward and backward propagators are best known and they have the most satisfactory theory.
Their application is in the Cauchy problem of the classical theory.
Therefore, we call them \emph{classical propagators}.
In particular, they can be uniquely generalized to the non-static case, under the rather general assumption that the spacetime is globally hyperbolic.

The situation is more complicated for the remaining propagators, which we call \emph{non-classical propagators}.
In contrast to the classical propagators, in a non-static setup non-classical propagators do not have obvious unique definitions.

The main motivation for non-classical propagators comes from quantum field theory.
This is perhaps an additional reason why they have been much less studied in mathematical literature.
One of the exceptions is a paper by Duistermaat--Hörmander~\cite{duistermaat}, which considers inverses of the Klein--Gordon operator (and more generally of differential operators of real principal type) modulo a smoothing operator.
Such approximative inverses are called \emph{parametrices}.
Duistermaat and Hörmander prove that \emph{Feynman parametrices} can be defined in a large generality.

Similarly to the Feynman propagator, the notion of a positive/negative frequency bisolution has been weakened under the name of a \emph{Hadamard state}.
There exists a considerable literature about them.
Concerning their general properties we would like to mention~\cite{radzikowski}, see also~\cite{khavkine-moretti} and references therein.
Hadamard states have been constructed using various methods, see \eg\ \cite{moretti:hadamard,sahlmann-verch:passive,gerard-wrochna:hadamard,brum-fredenhagen}.

It is well known that on a generic (globally hyperbolic) spacetime one can define the algebra of fields $\hat\psi(x),\hat\psi^*(x)$ (we use here the charged formalism, see \eg~\cite{derezinski}).
It is often stressed in the literature that on such spacetimes there is no distinguished Feynman propagator nor a distinguished Hadamard state.

However, it is also well-known (and important) that on static spacetimes there is
a distinguished Feynman propagator $G^\Feyn$ and a distinguished positive frequency bisolution $G^{(+)}$ -- those that we study in our paper.
This $G^{(+)}$ satisfies the Hadamard condition~\cite{fulling:hadamard,sahlmann-verch:passive} and it can be used to define the physically natural (time-translation invariant) vacuum state $\Omega$, so that we have the relations
\begin{align*}
  \bigl(\Omega\,\big|\,\hat\psi^*(x)\hat\psi(y)\Omega\bigr) &= G^{(+)}(x,y),\\
  \bigl(\Omega\,\big|\,\mathrm{T}\bigl(\hat\psi^*(x)\hat\psi(y)\bigr)\Omega\bigr) &= G^\Feyn(x,y).
\end{align*}

In this article we consider only the static case.
It can be viewed as an introduction to the non-static case, where the question about the possibility of defining distinguished non-classical propagators is much more complicated.

There exists large literature about the Klein--Gordon equation on curved spacetimes, see \eg~\cite{kay,bgp,derezinski-siemssen:evolution}.
However, we think that our paper offers some novel conceptual points on this subject.
To our knowledge, our paper is essentially the first in the mathematically rigorous literature that considers the Klein--Gordon operator as an operator on the Hilbert space $L^2(M)$, where the time extends from $-\infty$ to $+\infty$, and asks about its self-adjointness.
(Recall that $M$ denotes the spacetime).

One could say that considering the Klein--Gordon operator as a self-adjoint operator on $L^2(M)$ is an artificial mathematical question.
We show that this is not the case.
Our main result says that the Feynman propagator (of obvious physical importance) coincides with the boundary value of the resolvent (see Thm.~\ref{thm:main-theorem}).

Note that the Klein--Gordon operator is automatically Hermitian (symmetric).
Therefore, its spectrum coincides with the whole complex plane, the upper or lower halfplane, or is a subset of the real line.
The last case is true if and only if the Klein-Gordon operator is self-adjoint.
Thus its resolvent exists above and below the real axis (so that we can consider its boundary values) only if it is self-adjoint.

Our paper is restricted to the static case, which allows for major simplifications.
However, the questions that we pose (the self-adjointness of the Klein--Gordon operator, the existence of the boundary values of the resolvent and its relationship to the Feynman propagator) can be formulated for non-static spacetimes.
Thus, our paper points towards non-trivial further questions, of physical relevance, which we plan to investigate \cite{derezinski-siemssen:evolution,derezinski-siemssen:feynman}.
Note in particular, that the question of the self-adjointness of a non-static  Klein--Gordon operator is much more difficult from the static case.
In particular, our proof breaks down in a non-static situation.

Most of the literature about the Klein--Gordon operators on curved spacetimes does not consider an electrostatic potential and a variable term in front of $\dif t^2$ (called $V$, resp. $\beta$ in our paper).
If $\beta=1$ and $V=0$ most statements of our paper become easy (and can essentially be found in Sect.~18.3.10 of~\cite{derezinski}).
Including non-trivial $\beta$ and $V$ makes some of our proofs considerably more complicated.
In particular, we need to use some elements of the theory of bisectorial operators, see Sect.~\ref{sec:limiting-absorption}.

To our knowledge, in the mathematical literature the Klein--Gordon operator is rarely considered in the setting of $L^2(M)$.
Some of the recent results of Vasy and his collaborators~\cite{gell-redman,vasy} and of G\'{e}rard and Wrochna~\cite{gerard-wrochna:feynman} about Feynman parametrices can be interpreted in this way.

In some mathematical papers the Klein-Gordon operator is considered on spacetimes with time from a bounded open interval.
This is used, in particular, in some papers devoted to Sorkin--Johnston states, see \eg~\cite{brum-fredenhagen,fewster-verch}.
Restricting to a finite time interval introduces a non-physical question about boundary conditions at the begining and the end of time.
From the point of view of questions asked in our paper it is important that we consider time from $-\infty$ to $+\infty$.

The idea of considering the Klein-Gordon operator as a self-adjoint operator on $L^2(M)$ can be found in the physics literature.
The resolvent of the Klein-Gordon operator with constant external electromagnetic fields is an important ingredient of the famous computation of the effective action due to Schwinger, described \eg\ in Sect.~4.3.3 of~\cite{itzykson-zuber}.
An interesting, partly heuristic analysis of the Feynman propagator on a non-static spacetime was done by Rumpf and his collaborators in~\cite{rumpf1,rumpf2}.
In all these works the self-adjointness of the Klein-Gordon operator was taken for granted, even if it was not always obvious.

The self-adjointness of the spatial part of the Klein-Gordon operator, that is of the magnetic Laplace-Beltrami operator, is well understood~\cite{shubin,strichartz,devinatz,frehse,chernoff}.
It belongs to the domain of elliptic operators, which is not the main topic of our paper, therefore we include it in abstract assumptions.
The main novelty and difficulty of the operator considered in our paper is the fact that it comes from a hyperbolic equation, which does not have a fixed sign.
This causes problems which are non-existent for elliptic operators.

In our paper we make rather weak assumptions on the differentiability of the metric and the potentials.
One of the reasons for doing this is our desire to illustrate the advantages of  our approach to the construction of propagators, based on Hilbert space methods.
Of course, this approach is in principle well-known and belongs to the folklore of the subject.
It is used \eg\ in~\cite{derezinski,gerard-wrochna:feynman}.

In the last section we show that the Feynman propagator can be obtained with help of the Wick rotation.
This easy and essentially well-known fact, mentioned \eg\ in the case $\beta=1$, $V=0$ in Sect.~18.3.10 of~\cite{derezinski}, can be viewed as yet another argument why the Feynman propagator is so important and natural.
However, the Wick rotation can be defined only in static situations, whereas the construction of the Feynman propagator through the boundary value of the resolvent may work in more generality.

\subsection*{Notation and conventions}

Throughout this paper we use the following notation and conventions:

Suppose that $T$ is an operator on a Banach space $\mathcal{X}$.
We denote by $\Dom T$ its domain and by $\Ran T$ its range.
If $T$ is closable, its closure is $T^\cl$.
For its spectrum we write $\spec T$ and for the resolvent set $\reso T$.
$\Dom T$ is equipped with the norm $\norm{u}_T \defn \sqrt{\norm{Tu}^2+\norm{u}^2}$.

Now, suppose that $T$ is an operator on a Hilbert space $\mathcal{H}$ with inner product $(\,\cdot\;|\;\cdot\,)$.
If $T$ is positive, \ie, $(u \,|\, T u) \geq 0$, we write $T \geq 0$.
If also $\Ker T = \{0\}$, then we write $T > 0$.

We denote by $\aotimes$ the algebraic tensor product and by $\otimes$ its Hilbert space completion, which we call \emph{the} tensor product.

We say that $T$ is \emph{dissipative} if its numerical range is contained in the lower complex plane, \viz, $\Im (u \,|\, T u) \leq 0$ for $u \in \Dom T$.
If, additionally, $T$ is closed, densely defined and $\Ran (A - z) = \mathcal{H}$ for some $\Im z > 0$, then $T$ is \emph{maximally dissipative}.

The $p$-times continuously differentiable $\mathcal{X}$-valued functions on a manifold $M$ are denoted $C^p(M; \mathcal{X})$; if $\mathcal{X} = \CC$, we simply write $C^p(M)$.
Sets of compactly supported resp. bounded functions are indicated by a subscript `c' resp. `b'.
In the case of vector bundles we use the same notation but consider sections instead, \eg, $C^1(T^* M)$ denotes the continuously differentiable $1$-forms.
$\mathcal{D}'(M)$ denotes the space of distributions on $M$ and $\mathcal{D}_\comp'(M)$ stands for the space of distributions of compact support.

If $M$ is an orientable manifold and $\gamma$ a positive density (or a pseudo-density on a non-orientable manifold), we denote by $L^2(M, \gamma; \mathcal{X})$ the space of square-integrable $\mathcal{X}$-valued functions.
That is, $L^2(M, \gamma; \mathcal{X})$ is the completion of $C^\infty_\comp(M; \mathcal{X})$ with respect to the norm $\int_M \norm{\,\cdot\,}^2 \gamma$.
If $\mathcal{X} = \CC$, we omit it, and, if $\gamma$ is clear from the context, we omit it as well.
Often we consider the Hilbert space $L^2(M, \gamma)$ with the usual scalar product denoted by
\begin{equation*}
  (u \,|\, v) \defn \int_M \conj{u}\, v\, \gamma.
\end{equation*}
We recall that, given a semi-Riemannian metric~$g$ on~$M$, a natural density is given by~$\abs{g}^\frac12$.

Consider a manifold $M$ and let $A \in C^1(T^* M)$.
If $g$ is a Riemannian metric on $M$, we call $\lapl_A$, locally defined by ($D = -\im\partial$)
\begin{equation*}
  \lapl_A \defn \abs{g}^{-\frac12} (D_i - A_i) \abs{g}^\frac12 g^{ij} (D_j - A_j),
\end{equation*}
the \emph{(magnetic) Laplace--Beltrami operator}.
Adding a scalar potential, $\lapl_A + Y$ is a general form of a \emph{(magnetic) Schrödinger operator}.
If $g$ is instead Lorentzian (we adopt the signature convention $\mathord{-}\mathord{+}\dots\mathord{+}$), we locally define
\begin{equation*}
  \Box_A \defn \abs{g}^{-\frac12} (D_\mu - A_\mu) \abs{g}^\frac12 g^{\mu\nu} (D_\nu - A_\nu)
\end{equation*}
and call it the \emph{(electromagnetic) d'Alembertian}.
Adding a scalar potential $Y$ to the d'\-\nobreak\hspace{0pt}Alembertian, the \emph{(electromagnetic) Klein--Gordon operator} is $\KG \defn \Box_A + Y$.

\section{Klein--Gordon operator on a static spacetime}

Henceforth we shall assume
\begin{assumption}\label{asm:spacetime}
  $(M = \RR \times \Sigma, g)$ is a \emph{standard static spacetime}, \viz, its metric can \emph{globally} be written in the form
  \begin{equation}\label{eq:metric}
    g = -\beta\, \dif t^2 + g_\Sigma,
  \end{equation}
  where $\beta \in C^2(\Sigma)$ is positive and $g_\Sigma$ restricts to a (time-independent) Riemannian metric of class $C^2$ on $\Sigma$.
  Additionally we require that there exists $C > 0$ such that $C \leq \beta \leq C^{-1}$.
\end{assumption}

We consider the Klein--Gordon equation on $(M, g)$ minimally coupled to a static electromagnetic potential $A$ and with a static scalar potential $Y$.
To avoid unnecessarily baroque notation, we write $L^2(\Sigma) = L^2(\Sigma, \beta^\frac12 \abs{g_\Sigma}^\frac12)$ and $L^2(M) = L^2(M, \abs{g}^\frac12)$.
We assume the following properties for $A$ and $Y$:
\begin{assumption}\label{asm:A_and_m}
  $A \in C^1(T^* M)$ with $V \defn -A_0$ bounded, and $Y \in L^2_{\mathrm{loc}}(M)$ positive.
  $A$ and $Y$ are static, \viz, they do not depend on time.
\end{assumption}

Under these assumptions, we have locally (\viz, in a local coordinate chart)
\begin{equation*}
  \KG = -\frac{1}{\beta} (D_t + V)^2 + \abs{g}^{-\frac12} (D_i - A_i) \abs{g}^\frac12 g^{ij} (D_j - A_j) + Y.
\end{equation*}

The factor $\beta^{-1}$ in front of the time derivatives turns out to be a nuisance.
Therefore, instead of working directly with $\KG$, it is often more convenient to consider the operator
\begin{equation*}
  \tilde\KG \defn \beta^\frac12 \KG \beta^\frac12 = -(D_t + V)^2 + L,
\end{equation*}
where
\begin{align*}
  L &\defn \beta^\frac12 \abs{g}^{-\frac12} (D_i - A_i) \abs{g}^\frac12 g^{ij} (D_j - A_j) \beta^\frac12 + \tilde{Y}, \\
  \tilde{Y} &\defn \beta Y.
\end{align*}
Clearly the equation
\begin{equation}\label{eq:kg2}
  \tilde\KG u = 0
\end{equation}
is equivalent to~\eqref{eq:kg}: if $u$ solves~\eqref{eq:kg2}, then $\beta^{-\frac12} u$ solves~\eqref{eq:kg}.

We understand both $\KG$ and $\tilde\KG$ as operators on $L^2(M)$ with domain $C^2_\comp(M)$.
Since $C \leq \beta \leq C^{-1}$, we have that $\KG$ and $\tilde\KG$ share many properties.
In particular, $\tilde\KG$ is Hermitian and if $\KG$ is essentially self-adjoint on $C^2_\comp(M)$ then, by Lem.~\ref{lem:rel_selfadj}, $\tilde\KG$ is essentially self-adjoint on $C^2_\comp(M)$, too.
Note, however, the subtlety that generally $\Dom \KG^* \neq \Dom \tilde\KG^* = \beta^{-\frac12} \Dom \KG^*$.

One of our main assumptions for the remainder of this article is that

\begin{assumption}\label{asm:L_selfadj}
  $L$ is essentially self-adjoint on $C^\infty_\comp(\Sigma)$ with respect to $L^2(\Sigma)$.
  We do not distiguish in notation between $L$ and its closure.
\end{assumption}

\begin{remark}
  If $(\Sigma, g_\Sigma)$ is a complete Riemannian manifold, we see no obvious obstruction to showing the essential self-adjointness of the Schrödinger operator
  \begin{equation*}
    -\lapl_{\vec{A}} + Y = \abs{g}^{-\frac12} (D_i - A_i) \abs{g}^\frac12 g^{ij} (D_j - A_j) + Y
  \end{equation*}
  on $C^\infty_\comp(\Sigma)$, even if the metric and the volume form are only $C^2$.
  We were however unable to find a reference that discusses the self-adjointness in such a low regularity situation.
  In the case where $g_\Sigma$ and $\beta$ are smooth, this follows from~\cite{shubin}.
  For $Y = 0$, $\vec{A} = 0$, $\beta = 1$ and with a $C^2$ metric $g_\Sigma$, this follows from~\cite{strichartz}.
\end{remark}

\begin{remark}
  Suppose $M = \RR^{n+1}$ and choose global Cartesian coordinates.
  Then, under relatively general assumptions (\eg, $Y$ in $L^2_{\mathrm{loc}}$ and bounded below, $\vec{A}$ in $C^1$, $g_\Sigma$ is locally $C^{1,\alpha}$ [Hölder continuously differentiable] and in every open ball there exists $K > 0$ constant such that $K g_\Sigma$ is bounded from below by the Euclidean metric), the Schrödinger operator $-\lapl_{\vec{A}} + Y$ is essentially self-adjoint on $C^\infty_\comp(\Sigma)$, see in particular~\cite{devinatz,frehse}.
\end{remark}

Given our assumption~\ref{asm:L_selfadj}, it is not difficult to show the self-adjointness of $\KG$ using Nelson's commutator theorem:

\begin{theorem}\label{thm:kg_selfadj}
  The Klein--Gordon operator $\KG$ is essentially self-adjoint on~$C^2_\comp(M)$ with respect to~$L^2(M)$.
\end{theorem}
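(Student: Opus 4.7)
My plan is to apply Nelson's commutator theorem to the conjugated operator $\tilde\KG = \beta^\frac12 \KG \beta^\frac12$ and then transfer essential self-adjointness to $\KG$ via Lem.~\ref{lem:rel_selfadj}. The latter step works because multiplication by $\beta^{\pm 1/2}$ is a bounded, boundedly invertible operator on $L^2(M)$ that maps $C^2_\comp(M)$ to itself (since $\beta \in C^2$ is uniformly bounded between positive constants), so essential self-adjointness of $\KG$ and $\tilde\KG$ on $C^2_\comp(M)$ are equivalent.

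The static product structure $L^2(M) = L^2(\RR, \dif t) \otimes L^2(\Sigma)$ suggests taking as Nelson's comparison operator
\begin{equation*}
  N \defn D_t^2 \otimes \one + \one \otimes L + 1.
\end{equation*}
Since $D_t^2$ is essentially self-adjoint on $C^\infty_\comp(\RR)$, $L$ is essentially self-adjoint on $C^\infty_\comp(\Sigma)$ by Assumption~\ref{asm:L_selfadj}, and the two operators commute (acting on disjoint factors), the standard tensor-product construction yields $N$ essentially self-adjoint with $N \geq 1$ on $\mathcal{D} \defn C^\infty_\comp(\RR) \aotimes C^\infty_\comp(\Sigma) \subset C^2_\comp(M)$; nonnegativity of $L$ follows from the form identity $(f, L f) = \int g^{ij} \overline{(D_i - A_i)(\beta^\frac12 f)} (D_j - A_j)(\beta^\frac12 f) \, \abs{g}^\frac12 + \int Y \abs{\beta^\frac12 f}^2 \, \abs{g}^\frac12 \geq 0$.

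Nelson's first hypothesis $\norm{\tilde\KG f} \leq c \norm{N f}$ is established by expanding $\tilde\KG = -D_t^2 - 2 V D_t - V^2 + L$ and noting that, since $D_t^2, L \geq 0$ commute, the joint functional calculus gives $\norm{D_t^2 f}, \norm{L f}, \norm{D_t f}, \norm{f} \leq \norm{N f}$; the boundedness of $V$ then closes the estimate. For the commutator hypothesis, time-independence of $V$ and $L$ yields $[D_t^2, \tilde\KG] = 0$, and hence
\begin{equation*}
  [N, \tilde\KG] = -[L, (D_t + V)^2] = -2 [L, V] D_t - [L, V^2],
\end{equation*}
the two remaining terms being first-order spatial operators whose coefficients involve $\partial V$.

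The main obstacle is the form bound $\abs{(\tilde\KG f, N f) - (N f, \tilde\KG f)} \leq c (f, N f) = c \norm{N^\frac12 f}^2$. Using the factorization $[L, V] = \beta^\frac12 [\lapl_A, V] \beta^\frac12$ with $[\lapl_A, V] = -\abs{g}^{-\frac12} \partial_i(\abs{g}^\frac12 g^{ij} \partial_j V) - 2 i g^{ij} (\partial_j V)(D_i - A_i)$, I would integrate by parts in the sesquilinear form to shift the problematic second-derivative-of-$V$ in the first term onto the smooth test functions in $\mathcal D$; what remains involves only $\partial V$ (continuous since $A \in C^1$) paired with first-order derivatives of $f$ and $D_t f$, which are in turn controlled by $(f, L f) \leq (f, N f)$ and $\norm{D_t f}^2 \leq (f, N f)$, so that Cauchy--Schwarz closes the estimate; $[L, V^2]$ is handled analogously. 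Once both Nelson hypotheses hold, $\tilde\KG$ is essentially self-adjoint on $\mathcal D$ and therefore on $C^2_\comp(M)$, whence the theorem for $\KG$ follows by the transfer described above.
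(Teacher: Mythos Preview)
Your overall strategy---reduce to $\tilde\KG$ via Lem.~\ref{lem:rel_selfadj} and apply Nelson's commutator theorem---matches the paper's. The difference lies in the choice of the comparison operator, and here your choice creates a genuine gap.

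With $N = D_t^2 + L + 1$ you correctly obtain $[N,\tilde\KG] = -2[L,V]D_t - [L,V^2]$. After the integration by parts you describe, the form $(f,[L,V]D_t f)$ reduces to integrals of the type $\int g^{ij}(\partial_i V)\,\overline{(D_j-A_j)h}\,D_t h\,\abs{g}^{1/2}$ with $h=\beta^{1/2}f$. Cauchy--Schwarz then produces a factor $\bigl(\int \abs{\nabla V}_{g_\Sigma}^{2}\abs{D_t h}^{2}\abs{g}^{1/2}\bigr)^{1/2}$, and for this to be dominated by $\norm{D_t f}$ you need $\abs{\nabla V}_{g_\Sigma}$ to be \emph{bounded}. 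The paper, however, only assumes $A\in C^1(T^*M)$ and $V$ bounded; on a non-compact $\Sigma$ the continuous function $\abs{\nabla V}_{g_\Sigma}$ need not be bounded, so the estimate does not close. The alternative route $\abs{(f,[L,V]D_t f)} = 2\abs{\Im(Lf, VD_t f)} \leq 2\norm{V}\,\norm{Lf}\,\norm{D_t f}$ is of order $\norm{Nf}\,\norm{N^{1/2}f}$ rather than $\norm{N^{1/2}f}^2$, so Nelson's condition (ii) still fails.

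The paper avoids this entirely by the clever choice $N = (D_t - V)^2 + L - 2V^2 = \tilde\KG + 2D_t^2$. Time-independence of $\tilde\KG$ then forces $[\tilde\KG,N] = 2[\tilde\KG,D_t^2] = 0$, so condition (ii) holds trivially with $c=0$ and no control of $\nabla V$ is required. The price is that essential self-adjointness of this $N$ is not immediate from the tensor-product theorem; the paper recovers it by writing $N = N_0 - 2VD_t - V^2$ with $N_0 = D_t^2 + L$ and observing that $VD_t$ has $N_0$-bound less than~$1$.
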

\begin{proof}
  By Lem.~\ref{lem:rel_selfadj}, it is equivalent to show that $\tilde\KG$ is essentially self-adjoint on $C^2_\comp(M)$.
  We apply Nelson's commutator theorem (Thm.~\ref{thm:nelson}) with the Hermitian auxiliary operator
  \begin{equation*}
    N \defn (D_t - V)^2 + L - 2 V^2
  \end{equation*}
  on the dense subspace $\mathcal{C} \defn C^\infty_\comp(\RR; C^2_\comp(\Sigma)) \subset L^2(M)$.
  For this we check essential self-adjointness of $N$ on $\mathcal{C}$ and the conditions (i), (ii) of the theorem.

  Write $L^2(M) = L^2(\RR) \otimes L^2(\Sigma)$ and define the Hermitian operator $N_0 = D_t^2 \otimes \one + \one \otimes L$ on $\mathcal{C}$.
  We can then apply Thm.~\ref{thm:tens_prod_op} to see that $N_0$ is essentially self-adjoint on $C^\infty_\comp(\RR) \aotimes C^2_\comp(\Sigma)$.
  Clearly, $\mathcal{C} \subset \Dom N_0^*$ so $N_0$ is even essentially self-adjoint on $\mathcal{C}$.

  Let $u \in \mathcal{C}$ be arbitrary.
  Since $L \geq 0$,
  \begin{equation*}
    \norm{D_t u}^2 = (u \,|\, D_t^2 u) \leq (u \,|\, N_0 u) \leq \norm{u} \norm{N_0 u}
  \end{equation*}
  and thus for any $\varepsilon > 0$
  \begin{equation}\label{eq:partial_t_est}
    \norm{D_t u} \leq \varepsilon \norm{N_0 u} + \frac{1}{2 \varepsilon} \norm{u}.
  \end{equation}
  In particular this holds for $\varepsilon < 1$, \ie, $D_t$ has relative $N_0$-bound smaller than $1$.
  We can now deduce from the boundedness of $V$ that $N = N_0 - 2 V D_t - V^2$ is also essentially self-adjoint on $\mathcal{C}$.

  (i):
  It follows from the same estimate~\eqref{eq:partial_t_est}, that condition (i) is equivalent to
  \begin{equation*}
    \norm{\tilde\KG u} \leq a \norm{N_0 u} + b \norm{u}.
  \end{equation*}
  We have
  \begin{equation*}
    \norm{(-D_t^2 + L) u}^2 = \norm{(D_t^2 + L) u}^2 - 4 (D_t u \,|\, L D_t u) \leq \norm{(D_t^2 + L) u}^2,
  \end{equation*}
  where we have applied $L \geq 0$ and $L D_t = D_t L$ on $\mathcal{C}$.
  Therefore we finally obtain
  \begin{align*}
    \norm{\tilde\KG u}
    & \leq \norm{(-D_t^2 + L) u} + \norm{(2 V D_t + V^2) u}
      \leq \norm{(D_t^2 + L) u} + a \norm{N_0 u} + b \norm{u} \\
    & \leq (a + 1) \norm{N_0 u} + b \norm{u},
  \end{align*}
  using again the boundedness of $V$.

  (ii):
  We have to show that $\pm\im [\tilde\KG, N] \leq c N$ as quadratic forms on $\mathcal{C}$.
  However, on $\mathcal{C}$ we have (in the sense of quadratic forms)
  \begin{equation*}
    [\tilde\KG, N] = [\tilde\KG, \tilde\KG + 2 D_t^2] = 2 [\tilde\KG, D_t^2] = 0,
  \end{equation*}
  and thus $c = 0$, because $\tilde\KG$ does not depend on time.
\end{proof}

\begin{remark}
  If $V = 0$, an even simpler proof is possible.
  In this case we can write
  \begin{equation}\label{eq:P_tens_prod}
    \tilde\KG = -D_t^2 \otimes \one + \one \otimes L,
  \end{equation}
  and the essential self-adjointness of~$\tilde\KG$ on $C^2_\comp(\RR) \aotimes C^2_\comp(\Sigma)$ follows from the essential self-adjointness of~$D_t^2$ and $L$ on~$C^2_\comp(\RR)$ and~$C^2_\comp(\Sigma)$ by Thm.~\ref{thm:tens_prod_op}.
  Since we obviously have the inclusions $C^2_\comp(\RR) \aotimes C^2_\comp(\Sigma) \subset C^2_\comp(M) \subset \Dom \tilde\KG^*$, $\tilde\KG$ is even essentially self-adjoint on $C^2_\comp(M)$.
  As before, essential self-adjointness of $\KG$ on $C^2_\comp(M)$ follows by Lem.~\ref{lem:rel_selfadj}.
\end{remark}

\section{Hamiltonian formalism}

It is a simple exercise to rewrite~\eqref{eq:kg2} into an equation that is only first order in time:
Set $u_1(t) = u(t)$ and $u_2(t) = -(D_t + V) u(t)$, then
\begin{equation}\label{eq:1st_ord_kg}
  (\partial_t + \im B) \begin{pmatrix} u_1(t) \\ u_2(t) \end{pmatrix} = 0,
\end{equation}
where we defined
\begin{equation}\label{eq:def_B}
  B \defn \begin{pmatrix} V & \one \\ L & V \end{pmatrix}.
\end{equation}
Sometimes we call $\partial_t + \im B$ the \emph{first order Klein--Gordon operator}.

Let us denote by $(\,\cdot\;|\;\cdot\,)$ the canonical inner product on $L^2(\Sigma) \oplus L^2(\Sigma)$.
Although we use the same notation for the inner product on $L^2(M)$, no confusion should arise.
We introduce the \emph{charge matrix}
\begin{equation*}
  Q \defn \begin{pmatrix} 0 & \one \\ \one & 0 \end{pmatrix}.
\end{equation*}
It facilitates the definition of a (sesquilinear) charge form $(\,\cdot\;|\,Q\;\cdot\,)$ on $L^2(\Sigma) \oplus L^2(\Sigma)$.
The charge form plays essentially the role of the symplectic form in our complex setting.
The complex formalism is perhaps less known, however it is more convenient.
In particular, it is used by Gérard and Wrochna, \eg\ in~\cite{gerard-wrochna:hadamard}.

More importantly, we use $Q$ to define the \emph{classical Hamiltonian}
\begin{equation}\label{eq:H-QB}
  H \defn Q B = \begin{pmatrix} L & V \\ V & \one \end{pmatrix}
\end{equation}
with domain $(\Dom L) \oplus L^2(\Sigma)$.
\begin{proposition}\label{prop:H_selfadj}
  $H$ is self-adjoint in the sense of $L^2(\Sigma) \oplus L^2(\Sigma)$.
\end{proposition}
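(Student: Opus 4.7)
The plan is to decompose $H$ as the sum of a ``diagonal'' self-adjoint operator and a bounded self-adjoint perturbation, then invoke the Kato--Rellich theorem. Explicitly, I would write
\begin{equation*}
  H = H_0 + W,
  \qquad
  H_0 \defn \begin{pmatrix} L & 0 \\ 0 & \one \end{pmatrix},
  \qquad
  W \defn \begin{pmatrix} 0 & V \\ V & 0 \end{pmatrix},
\end{equation*}
with $\Dom H_0 = (\Dom L) \oplus L^2(\Sigma)$.

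First, I would argue that $H_0$ is self-adjoint on its natural domain. Since $L$ is self-adjoint on $\Dom L$ (by Assumption~\ref{asm:L_selfadj}) and the identity $\one$ is self-adjoint (and bounded) on $L^2(\Sigma)$, the orthogonal direct sum $H_0 = L \oplus \one$ is self-adjoint on $(\Dom L) \oplus L^2(\Sigma)$. Next, the potential $V = -A_0$ is a real-valued bounded function by Assumption~\ref{asm:A_and_m}, so multiplication by $V$ is a bounded self-adjoint operator on $L^2(\Sigma)$; consequently $W$ is a bounded Hermitian operator on $L^2(\Sigma) \oplus L^2(\Sigma)$, with $\norm{W} \leq \norm{V}_\infty$.

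Finally, since $W$ is a bounded symmetric operator defined on all of $L^2(\Sigma) \oplus L^2(\Sigma)$, it is in particular relatively $H_0$-bounded with relative bound $0$. The Kato--Rellich theorem then implies that $H = H_0 + W$ is self-adjoint on $\Dom H_0 = (\Dom L) \oplus L^2(\Sigma)$, which is the asserted domain. There is no serious obstacle here: the only point worth checking with a bit of care is the reality of $V$ (so that $W$ is Hermitian and not merely closed), which follows from the standing assumption that the electromagnetic potential $A$ is a real $C^1$ one-form on $M$.
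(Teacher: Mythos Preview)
Your proposal is correct and takes essentially the same approach as the paper: decompose $H$ into the diagonal part $\begin{psmallmatrix} L & 0 \\ 0 & \one \end{psmallmatrix}$, which is self-adjoint on $(\Dom L)\oplus L^2(\Sigma)$, plus the bounded self-adjoint off-diagonal part $\begin{psmallmatrix} 0 & V \\ V & 0 \end{psmallmatrix}$. The paper simply records these two facts and leaves the conclusion implicit, whereas you spell out the Kato--Rellich step; the substance is the same.
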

\begin{proof}
  $\begin{psmallmatrix} L & 0 \\ 0 & \one \end{psmallmatrix}$ is obviously self-adjoint, and $\begin{psmallmatrix} 0 & V \\ V & 0 \end{psmallmatrix}$ is self-adjoint and bounded.
\end{proof}

Physically realistic classical Hamiltonians should be positive, yet this cannot be guaranteed for $H$ as defined above.
Positivity can be spoiled if the electric potential $V$ is too large and it is easy to see that $H \geq 0$ if $L-V^2 \geq 0$.
A more precise result is the following:

\begin{proposition}\label{prop:H_pos}
  Let $C < 1$.
  $H \geq C$ if and only if $L - C - (1 - C)^{-1} V^2 \geq 0$ or, equivalently, $L - V^2 \geq C (1 - C)^{-1} V^2$.
  The implications continue to hold if replace all occurrences of~$\geq$ by~$>$.
\end{proposition}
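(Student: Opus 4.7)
My approach is to diagonalize the quadratic form associated with $H - C$ by completing the square in the second coordinate, thereby reducing positivity of $H - C$ on $(\Dom L)\oplus L^2(\Sigma)$ to a scalar operator inequality in the first coordinate. Since $V = -A_0$ is a bounded real multiplication operator, it is self-adjoint and $(1-C)^{-1} V^2$ is a bounded self-adjoint perturbation of $L - C$, so that $L - C - (1-C)^{-1} V^2$ is a well-defined Hermitian operator on $\Dom L$.

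Concretely, for $u = (u_1, u_2) \in (\Dom L)\oplus L^2(\Sigma)$ I would expand, using~\eqref{eq:H-QB},
\begin{equation*}
  ((H - C) u \,|\, u) = (u_1 \,|\, (L - C) u_1) + 2 \Re (V u_1 \,|\, u_2) + (1 - C) \norm{u_2}^2,
\end{equation*}
and then exploit $1 - C > 0$ to complete the square in $u_2$, arriving at the key identity
\begin{equation*}
  ((H - C) u \,|\, u) = (1 - C)\, \bigl\lVert u_2 + (1-C)^{-1} V u_1 \bigr\rVert^2 + \bigl( u_1 \,\big|\, [L - C - (1-C)^{-1} V^2]\, u_1 \bigr).
\end{equation*}
Both directions of the equivalence follow at once from this identity. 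If $L - C - (1-C)^{-1} V^2 \geq 0$, both summands are nonnegative for every admissible $u$, so $H \geq C$. Conversely, given an arbitrary $u_1 \in \Dom L$, the choice $u_2 \defn -(1-C)^{-1} V u_1$ lies in $L^2(\Sigma)$ by the boundedness of $V$; hence $u = (u_1, u_2)$ is a legitimate test vector which makes the squared-norm term vanish, and the hypothesis $H \geq C$ forces nonnegativity of $(u_1 \,|\, [L - C - (1-C)^{-1} V^2]\, u_1)$ on the dense domain $\Dom L$, i.e., $L - C - (1-C)^{-1} V^2 \geq 0$.

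The equivalent form expressed in terms of $L - V^2$ is a purely algebraic rearrangement of the previous inequality. For the strict-inequality statement, the same identity shows that $((H-C)u\,|\,u) = 0$ forces $u_2 = -(1-C)^{-1} V u_1$ and then $(u_1 \,|\, [L - C - (1-C)^{-1} V^2]\, u_1) = 0$; consequently $H > C$ is equivalent to $L - C - (1-C)^{-1} V^2 > 0$. There is no real obstacle here: the argument is a one-line completion-of-the-square, and the only point requiring the slightest care is verifying that the minimizer $u_2$ chosen to kill the square still belongs to the domain, which is immediate from the boundedness of $V$.
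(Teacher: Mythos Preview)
Your proof is correct and is essentially the same argument as the paper's: the paper writes the operator identity
\[
  H - C =
  \begin{pmatrix} \one & (1 - C)^{-1} V \\ 0 & \one \end{pmatrix}
  \begin{pmatrix} L - C - (1 - C)^{-1} V^2 & 0 \\ 0 & 1 - C \end{pmatrix}
  \begin{pmatrix} \one & 0 \\ (1 - C)^{-1} V & \one \end{pmatrix},
\]
which is precisely the $LDL^*$ factorization whose quadratic-form version is your completion of the square; the invertibility of the outer triangular factors plays the same role as your freedom to choose $u_2 = -(1-C)^{-1} V u_1$.
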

\begin{proof}
  Decompose $H - C$ as
  \begin{equation}\label{eq:H_decomp}
    H - C =
    \begin{pmatrix} \one & (1 - C)^{-1} V \\ 0 & \one \end{pmatrix}
    \begin{pmatrix} L - C - (1 - C)^{-1} V^2 & 0 \\ 0 & 1 - C \end{pmatrix}
    \begin{pmatrix} \one & 0 \\ (1 - C)^{-1} V & \one \end{pmatrix}
  \end{equation}
  and note that the matrices on the left and right are invertible.
  The result follows immediately.
\end{proof}

Henceforth we will require:
\begin{assumption}\label{asm:H_pos}
  $H > 0$ or, equivalently, $L > V^2$.
\end{assumption}

We remark that this assumption can rule out the case $Y = 0$ on spacetimes with compact Cauchy surfaces $\Sigma$.

Since $H > 0$, we can consider the form domain of $H$ endowed with the scalar product given by $H$, the \emph{energy product}
\begin{equation*}
  (u \,|\, v)_\en \defn (u \,|\, H v),
\end{equation*}
as a Hilbert space in its own right.
We denote this space by $\mathcal{H}_\en$ and call it the \emph{energy space}.

\begin{proposition}
  $\mathcal{H}_\en = (\Dom L^\frac12) \oplus L^2(\Sigma)$.
\end{proposition}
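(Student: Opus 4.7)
The plan is to exploit the factorization $H = T^* D T$ already present in the proof of Proposition~\ref{prop:H_pos} (with $C = 0$), transfer it to the level of closed quadratic forms, and read off the form domain of $H$ from that of $D$. By construction, $\mathcal{H}_\en$ is the form domain of $H$ equipped with the energy product; since $H$ is self-adjoint and strictly positive by Assumption~\ref{asm:H_pos}, this form domain is nothing but $\Dom H^{\frac12}$. So the task reduces to identifying $\Dom H^{\frac12}$.

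I would set
\begin{equation*}
  T \defn \begin{pmatrix} \one & 0 \\ V & \one \end{pmatrix}
  \quad\text{and}\quad
  D \defn \begin{pmatrix} L - V^2 & 0 \\ 0 & \one \end{pmatrix}.
\end{equation*}
Because $V$ is bounded and Hermitian, $T$ is bounded with bounded inverse; and because $V^2$ is a bounded symmetric perturbation of the self-adjoint $L$, the operator $L - V^2$ is self-adjoint on $\Dom L$ (Kato--Rellich) and strictly positive by Assumption~\ref{asm:H_pos}. A straightforward block computation, essentially just a rewriting of~\eqref{eq:H_decomp} with $C = 0$, then gives $H = T^* D T$ on $\Dom H = \Dom L \oplus L^2(\Sigma)$.

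The key step is to transfer this operator identity to one for the associated closed quadratic forms. Since $T$ and $T^{-1}$ are bounded, an approximation argument shows that $u \in \Dom H^{\frac12}$ if and only if $T u \in \Dom D^{\frac12}$, with the energy forms of $H$ and $D$ corresponding under this bijection. Because $D$ is block-diagonal, $\Dom D^{\frac12} = \Dom (L - V^2)^{\frac12} \oplus L^2(\Sigma)$. A KLMN-type observation -- namely that boundedness of $V^2$ makes the identity $q_{L - V^2}(u) = q_L(u) - \norm{V u}^2$, valid on $\Dom L$, extend to a closed form on $\Dom L^{\frac12}$ -- identifies the form domain of $L - V^2$ with $\Dom L^{\frac12}$. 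Finally, $T$ and $T^{-1}$ preserve $\Dom L^{\frac12} \oplus L^2(\Sigma)$ because multiplication by the bounded function $V$ maps $L^2(\Sigma)$ to itself, yielding $\Dom H^{\frac12} = T^{-1} \Dom D^{\frac12} = \Dom L^{\frac12} \oplus L^2(\Sigma)$.

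The main obstacle is precisely this passage from the operator-level factorization $H = T^* D T$ to the corresponding statement for closed forms, since the respective form domains are strictly larger than the operator domains $\Dom H$ and $\Dom D$. It is handled routinely by approximation: any $u \in \Dom H^{\frac12}$ is the energy-norm limit of a sequence in $\Dom H$, and bounded invertibility of $T$ ensures that images under $T$ of such Cauchy sequences remain Cauchy in the $D$-form norm, and conversely. No further technical input is needed.
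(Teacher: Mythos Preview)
Your argument is correct, but it takes a different route from the paper. The paper's proof is a single interpolation step: since $\Dom H = (\Dom L)\oplus L^2(\Sigma)$ and $H$ differs from the diagonal operator $\diag(L,\one)$ only by a bounded symmetric perturbation, equality of the operator domains forces equality of the fractional-power domains, \ie, $\Dom H^\theta = (\Dom L^\theta)\oplus L^2(\Sigma)$ for all $0\le\theta\le1$; taking $\theta=\tfrac12$ gives the result.

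Your approach instead exploits the congruence $H=T^*DT$ from~\eqref{eq:H_decomp}, reads off the form domain of the diagonal $D$, and then pulls it back through the bounded bijection $T$. This is more hands-on: it avoids the Heinz/interpolation black box at the price of a few extra verifications (the KLMN step identifying $\Dom(L-V^2)^{1/2}$ with $\Dom L^{1/2}$, and the approximation argument transferring the factorization from operator to form level). The paper's argument is shorter but leans on a nontrivial theorem about fractional powers; yours is longer but entirely elementary once the decomposition~\eqref{eq:H_decomp} is in hand.
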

\begin{proof}
  $\Dom H = (\Dom L) \oplus L^2(\Sigma)$ implies
  $\Dom H^\theta = (\Dom L^\theta) \oplus L^2(\Sigma)$ for $0 \leq \theta \leq 1$.
  Hence $\mathcal{H}_\en = \Dom H^\frac12 = (\Dom L^\frac12) \oplus L^2(\Sigma)$.
\end{proof}

\begin{remark}
  The original Hilbert space $L^2(\Sigma) \oplus L^2(\Sigma)$ plays a secondary role.
  The central role is played by $\mathcal{H}_\en$ and the scale of Hilbert spaces
  \begin{equation*}
    \mathcal{H}_\alpha \defn \abs{B}^{(1-\alpha)/2}_\en \mathcal{H}_\en,
    \quad
    \alpha \in \RR,
  \end{equation*}
  with scalar products
  \begin{equation*}
    (u\,|\,v)_\alpha \defn \bigl(u\,\big|\abs{B}_\en^{\alpha-1} v\bigr)_\en,
    \quad
    u, v \in \mathcal{H}_\alpha.
  \end{equation*}
  Of particular interest is the so-called dynamical space $\mathcal{H}_{\mathrm{dyn}} \defn \mathcal{H}_0$, see \eg~\cite{derezinski}.
\end{remark}

\begin{remark}
  $Q$ is not a bounded operator on $\mathcal{H}_\en$.
  However, it is easy to see that $Q$ can be defined with domain $(\Dom L^\frac12) \oplus (\Dom L^\frac12)$ and is closed on $\mathcal{H}_\en$.
\end{remark}

Consider $B$, given by~\eqref{eq:def_B}, an operator on $C^\infty_\comp(\Sigma) \oplus C^\infty_\comp(\Sigma)$.

\begin{proposition}\label{prop:B_selfadj}
  $B$ is essentially self-adjoint on $C^\infty_\comp(\Sigma) \oplus C^\infty_\comp(\Sigma)$ in the sense of $\mathcal{H}_\en$; its resolvent set is given by
  \begin{equation}\label{eq:reso_B}
    \reso(B) = \bigl\{ z \in \CC \;\big|\; \bigl( L - (V - z)^2 \bigr) (1+L)^{-\frac12} \;\text{is boundedly invertible} \bigr\}.
  \end{equation}
  We identify $B$ with its closure in~$\mathcal{H}_\en$.
\end{proposition}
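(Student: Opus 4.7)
My plan has two parts: (a) a direct check that $B$ is symmetric on $\mathcal{C} \defn C^\infty_\comp(\Sigma) \oplus C^\infty_\comp(\Sigma)$ with respect to the energy inner product, and (b) the construction, for a sufficiently rich set of non-real $z$, of a bounded two-sided inverse of $B - z$ on $\mathcal{H}_\en$ by means of the explicit block-matrix formula. Once (a) and (b) are in place, essential self-adjointness follows from the basic range criterion, and \eqref{eq:reso_B} drops out of the inversion formula.

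For (a), a direct computation gives
\begin{equation*}
  HB = \begin{pmatrix} LV + VL & L + V^2 \\ L + V^2 & 2V \end{pmatrix},
\end{equation*}
which is formally $L^2(\Sigma)\oplus L^2(\Sigma)$-symmetric on $\mathcal{C}$. Hence $(u \,|\, Bv)_\en = (u \,|\, HBv) = (HBu \,|\, v) = (Bu \,|\, v)_\en$ for all $u, v \in \mathcal{C}$, so $B$ is symmetric in $\mathcal{H}_\en$.

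For (b), solving $(B-z)\begin{pmatrix}u_1\\u_2\end{pmatrix} = \begin{pmatrix}f\\g\end{pmatrix}$ by eliminating $u_2$ yields
\begin{equation*}
  u_1 = \bigl(L - (V-z)^2\bigr)^{-1}\bigl(g - (V-z) f\bigr), \qquad u_2 = f - (V-z) u_1.
\end{equation*}
Writing $L^\frac12 u_1 = \bigl(L^\frac12 (1+L)^{-\frac12}\bigr) M_z^{-1} \bigl(g - (V-z) f\bigr)$ with $M_z \defn \bigl(L - (V-z)^2\bigr)(1+L)^{-\frac12}$, and exploiting boundedness of $L^\frac12 (1+L)^{-\frac12}$ and of $V$ on $L^2(\Sigma)$, I expect to read off that $(f, g) \mapsto (u_1, u_2)$ is bounded on $\mathcal{H}_\en = (\Dom L^\frac12) \oplus L^2(\Sigma)$ precisely when $M_z$ is boundedly invertible; the same algebra run backwards gives the converse, which is exactly \eqref{eq:reso_B}. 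To find concrete non-real $z$ where this condition holds I would take $z = \pm i\lambda$ with $\lambda > 0$ large: then $L - (V \mp i\lambda)^2 = L - V^2 + \lambda^2 \pm 2i\lambda V$ has self-adjoint part bounded below by $L + \lambda^2 - \norm{V}_\infty^2$, which dominates the bounded skew part $\pm 2 i \lambda V$ for $\lambda \gg \norm{V}_\infty$, so a standard perturbation argument furnishes the required bounded invertibility of $M_{\pm i\lambda}$.

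With $R_{\pm i\lambda}$ in hand, the closure of $B|_\mathcal{C}$ inherits a bounded inverse at $\pm i\lambda$ and, combined with (a), is therefore self-adjoint. Upgrading to essential self-adjointness on $\mathcal{C}$ itself requires a final core-approximation: approximate arbitrary $(f, g) \in \mathcal{H}_\en$ by $(f_n, g_n) \in \mathcal{C}$ in the energy norm (using that $C^\infty_\comp(\Sigma)$ is a core for $L^\frac12$, by Assumption \ref{asm:L_selfadj}) and then approximate $R_{i\lambda}(f_n, g_n) \in \Dom L \oplus \Dom L^\frac12$ by elements of $\mathcal{C}$ in the $B$-graph norm, again using Assumption \ref{asm:L_selfadj}. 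This last step is what I expect to be the main technical obstacle: a priori the closure is only known to be self-adjoint on an abstract operator-theoretic domain, and converting this into density of $\mathcal{C}$ in the graph norm of $B$ takes some care, especially given the low regularity hypotheses on $g_\Sigma$, $\beta$, $A$, and $Y$.
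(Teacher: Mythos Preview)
Your approach is essentially the paper's: symmetry is shown there via the factorization $B = QH$ and $Q^* = Q$ (equivalent to your explicit computation of $HB$), the resolvent is written as a block factorization and then conjugated by the isomorphism $U = \diag\bigl((1+L)^{-1/2},\one\bigr)$ from $L^2(\Sigma)\oplus L^2(\Sigma)$ to $\mathcal{H}_\en$ (a repackaging of your elimination yielding the same operator $M_z$), and non-real points in the resolvent set are produced by a Neumann series for $\bigl(\one - (V^2 - 2zV)(L-z^2)^{-1}\bigr)^{-1}$ at $z = \im y$ with $\abs{y}$ large (the same mechanism as your perturbation bound). The core-approximation step you single out as the main technical obstacle is handled by the paper in a single clause (``$B$ is closable, because $C^\infty_\comp(\Sigma)$ is a core for $L$''), so your caution there is warranted but does not reflect a difference in strategy.
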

\begin{proof}
  We have that $B$ is Hermitian in the sense of $\mathcal{H}_\en$ because
  \begin{equation*}
    (B u \,|\, v)_\en
    = (B u \,|\, H v)
    = (Q H u \,|\, H v)
    = (H u \,|\, Q H v)
    = (u \,|\, B v)_\en
  \end{equation*}
  for all $u, v \in C^\infty_\comp(\Sigma) \oplus C^\infty_\comp(\Sigma)$.
  Moreover, $B$ is closable, because $C^\infty_\comp(\Sigma)$ is a core for $L$.
  Its resolvent can be written as
  \begin{equation}\label{eq:B_decomp}
    (B - z)^{-1} =
    \begin{pmatrix} \one & 0 \\ z - V & \one \end{pmatrix}
    \begin{pmatrix*} 0 & \big( L - (V - z)^2 \big)^{-1} \\ \one & 0 \end{pmatrix*}
    \begin{pmatrix} \one & 0 \\ z - V & \one \end{pmatrix},
  \end{equation}
  which should be understood on the space $\mathcal{H}_\en$.
  Introduce
  \begin{equation*}
    U \defn \begin{pmatrix} (1+L)^{-\frac12} & 0 \\ 0 & \one \end{pmatrix},
  \end{equation*}
  which can be treated as a unitary from $L^2(\Sigma) \oplus L^2(\Sigma)$ to $\mathcal{H}_\en = (\Dom L^\frac12) \oplus L^2(\Sigma)$.
  Let us transport $(B - z)^{-1}$ onto $L^2(\Sigma) \oplus L^2(\Sigma)$:
  \begin{align*}
    U^{-1}(B - z)^{-1}U
    &=
    \begin{pmatrix} \one & 0 \\ (z - V) (1+L)^{-\frac12} & \one \end{pmatrix}
    \begin{pmatrix*} 0 & (1+L)^{\frac12}\big( L - (V - z)^2 \big)^{-1} \\ (1+L)^{-\frac12} & 0 \end{pmatrix*}
    \\&\qquad\cdot
    \begin{pmatrix} \one & 0 \\ (z - V) (1+L)^{-\frac12} & \one \end{pmatrix}.
  \end{align*}
  Hence we see that the resolvent set of $B$ is given by~\eqref{eq:reso_B}.
  To see that $B$ is self-adjoint, we need to find $z \in \CC$ above and below the real line such that
  \begin{equation*}
    (1+L)^{\frac12}\big( L - (V - z)^2 \big)^{-1} = (1+L)^{\frac12}{(L - z^2)}^{-1} {\big( \one - (V^2 - 2 z V) {(L - z^2)}^{-1} \big)}^{-1}
  \end{equation*}
  is well defined on $L^2(\Sigma)$.
  But for $z=\im y$ with $\abs{y}$ large enough
  \begin{equation*}
    \norm[\big]{(V^2 - 2 z V) {(L - z^2)}^{-1}} \leq \norm[\big]{V^2 - 2 z V} \norm[\big]{{(L - z^2)}^{-1}} < 1.
  \end{equation*}
  Hence we can use a Neumann series argument.
\end{proof}

\section{Inverses and bisolutions}

The concepts of an inverse or bisolution of $\partial_t + \im B$ or $\KG$ seem clear intuitively, but it is not obvious which functional spaces to choose in their definition, especially since we want to include low regularity situations.
To avoid such issues we will occasionally interpret the first order Klein--Gordon operator $\partial_t + \im B$ in the distributional sense, as a map from $\mathcal{D}'(M) \oplus \mathcal{D}'(M)$ into itself or as a map from $\mathcal{D}_\comp'(M) \oplus \mathcal{D}_\comp'(M)$ into itself.
Similarly, we will occasionally interpret the Klein--Gordon operator $\KG$ as a map from $\mathcal{D}'(M)$ into itself or as a map from $\mathcal{D}_\comp'(M)$ into itself.

Here, we will call an operator $E^\bullet$ from $C_\comp(M) \oplus C_\comp(M)$ to $\mathcal{D}'(M) \oplus \mathcal{D}'(M)$ an inverse, resp. a bisolution of $\partial_t + \im B$ if for $h \in C^2_\comp(M) \oplus C^2_\comp(M)$ we have
\begin{equation}
  (\partial_t + \im B) E^\bullet h = E^\bullet (\partial_t + \im B) h = h,\label{def1}
  \quad\text{resp.}\quad
  (\partial_t + \im B) E^\bullet h = E^\bullet (\partial_t + \im B) h = 0.
\end{equation}
(Note that $(\partial_t + \im B) h \in C_\comp(M) \oplus C_\comp(M)$, hence $E^\bullet (\partial_t + \im B) h$ makes sense in~\eqref{def1}. Besides, $\partial_t + \im B$ acting on $E^\bullet h$ can be understood in the distributional sense.)

An operator $G^\bullet$ from $C_\comp(M)$ to $\mathcal{D}'(M)$
will be called an inverse, resp. a bisolution of $\KG$ if for $f \in C^\infty_\comp(M)$ we have
\begin{equation}
  \KG G^\bullet f = G^\bullet \KG f = f,\label{def3}
  \quad\text{resp.}\quad
  \KG G^\bullet f = G^\bullet \KG f = 0.
\end{equation}
($\KG f \in C_\comp(M)$, hence $G^\bullet \KG f$ makes sense in~\eqref{def3}.
Besides, $\KG$ acting on $G^\bullet f$ can be understood in the distributional sense.)

Ultimately we are interested in propagators of the Klein--Gordon operator $\KG$, but the propagators of $\partial_t + \im B$ are closely related to those of~$\KG$.
Let us denote by $\pi_2$ the projection onto the second component:
\begin{equation}\label{eq:proj_maps}
  \pi_2 \begin{pmatrix} u_1 \\ u_2 \end{pmatrix} \defn u_2,
\end{equation}
We also define the embeddings
\begin{equation}\label{eq:emb_maps}
  \iota_2 u \defn \begin{pmatrix} 0 \\ u \end{pmatrix},
  \quad
  \rho u \defn \begin{pmatrix} u \\ -(D_t + V) u \end{pmatrix}.
\end{equation}
The maps $\pi_2, \rho, \iota_2$ can be understood between various spaces which should be inferred from the context.
A simple calculation shows that
\begin{equation*}
  \tilde\KG = \im \pi_2 (\partial_t + \im B) \rho
  \quad\text{and}\quad
  \KG = \im \beta^{-\frac12} \pi_2 (\partial_t + \im B) \rho \beta^{-\frac12}.
\end{equation*}
Consequently we find

\begin{proposition}\label{prop:E-vs-G}
  Suppose that $E^\bullet$ is either an inverse or a bisolution of $\partial_t + \im B$ in the sense of~\eqref{def1}.
  Then
  \begin{equation}\label{eq:G-E}
    G^\bullet = -\im \beta^\frac12 \pi_2 Q E^\bullet \iota_2 \beta^\frac12
  \end{equation}
  is an inverse resp. a bisolution of $\KG$ in the sense of~\eqref{def3}.
\end{proposition}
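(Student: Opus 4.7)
The plan is a direct algebraic check built from three elementary observations. First, the factorizations $\tilde\KG = \im\pi_2(\partial_t + \im B)\rho$ and $\KG = \beta^{-1/2}\tilde\KG\beta^{-1/2}$ given immediately before the statement, combined with the fact that multiplication by the time-independent scalar $\beta^{\pm 1/2}$ commutes with every operator appearing on the right-hand side of the formula for $G^\bullet$, namely $\pi_2$, $\iota_2$, $Q$, $\rho$, and $(\partial_t + \im B)$. Second, the pointwise identity $\pi_2 Q\rho = \one$, immediate from $Q\binom{u}{-(D_t+V)u} = \binom{-(D_t+V)u}{u}$. Third, the observation that the first component of $(\partial_t + \im B)\binom{h_1}{h_2}$ equals $(\partial_t + \im V)h_1 + \im h_2$, and hence vanishes if and only if $h_2 = -(D_t+V)h_1$.

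Two consequences flow from this last point. Applied to $h = \rho u$, it yields $\iota_2\pi_2(\partial_t + \im B)\rho = (\partial_t + \im B)\rho$, since the first component of $(\partial_t + \im B)\rho u$ is automatically zero. Applied to $h = E^\bullet\iota_2 f$, in which case $(\partial_t + \im B) h$ equals either $\iota_2 f$ or $0$, both with vanishing first component, it forces $h_2 = -(D_t+V)h_1$, hence $h = \rho h_1$; equivalently, $\rho\pi_2 Q E^\bullet\iota_2 = E^\bullet\iota_2$ on test functions.

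With these in hand, the verification is a short calculation: the $\beta$ factors and the overall prefactor $\im\cdot(-\im) = 1$ cancel cleanly at either end, and one obtains
\begin{equation*}
  \KG G^\bullet f = \beta^{-1/2}\pi_2(\partial_t + \im B)\rho\,\pi_2 Q E^\bullet\iota_2\beta^{1/2} f = \beta^{-1/2}\pi_2(\partial_t + \im B) E^\bullet\iota_2\beta^{1/2} f,
\end{equation*}
which equals $\beta^{-1/2}\pi_2\iota_2\beta^{1/2}f = f$ in the inverse case and $0$ in the bisolution case. Similarly
\begin{equation*}
  G^\bullet\KG f = \beta^{1/2}\pi_2 Q E^\bullet\iota_2\pi_2(\partial_t + \im B)\rho\beta^{-1/2} f = \beta^{1/2}\pi_2 Q E^\bullet(\partial_t + \im B)\rho\beta^{-1/2} f,
\end{equation*}
using $\iota_2\pi_2(\partial_t + \im B)\rho = (\partial_t + \im B)\rho$; the right-sided defining property of $E^\bullet$ followed by $\pi_2 Q\rho = \one$ then reduces this to $\beta^{1/2}\pi_2 Q\rho\beta^{-1/2}f = f$ (inverse) or $0$ (bisolution). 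I do not foresee any real obstacle, as the entire argument is algebraic; one need only verify that the intermediate compositions make sense on the function spaces specified in the definitions of inverse and bisolution, which is routine given that $E^\bullet$ maps $C_\comp\oplus C_\comp$ to $\mathcal{D}'\oplus\mathcal{D}'$ and that $\rho$, $(\partial_t+\im B)$ are understood in the distributional sense.
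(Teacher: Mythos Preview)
Your proposal is correct and follows essentially the same approach as the paper: both arguments hinge on the identities $\pi_2 Q\rho = \one$, $\iota_2\pi_2(\partial_t+\im B)\rho = (\partial_t+\im B)\rho$, and $\rho\pi_2 Q E^\bullet\iota_2 = E^\bullet\iota_2$, the last two deduced from the vanishing of the first component of $(\partial_t+\im B)$ applied to the relevant vectors. One small slip in your preamble: multiplication by $\beta^{\pm 1/2}$ does \emph{not} commute with $\partial_t + \im B$ (since $L$ involves spatial derivatives and $\beta$ depends on $\Sigma$), but you never actually use this---the inner $\beta^{-1/2}\beta^{1/2}$ factors simply cancel as adjacent scalar multiplications, exactly as your computation shows.
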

\begin{proof}
  Clearly, we have $\pi_2 Q \rho = \one$ and $\pi_2 \iota_2 = \one$.
  Since $E^\bullet$ is an inverse or bisolution, it satisfies
  \begin{align*}
    0
    &= \pi_2 Q (\partial_t + \im B) E^\bullet \iota_2 f
     = \pi_2 Q (\partial_t + \im B) \begin{pmatrix} u_1 \\ u_2 \end{pmatrix}\\
    &= \big( (\partial_t + \im V) u_1 + \im u_2 \big),
    \quad\text{where}\quad
    E^\bullet \iota_2 f = \begin{pmatrix} u_1 \\ u_2 \end{pmatrix},\,
    f \in C^\infty_\comp(M),
  \end{align*}
  \ie, $u_2 = -(D_t + V) u_1$.
  Applying $\rho \pi_2 Q$ to $(u_1, u_2)$, we find
  \begin{equation*}
    \rho \pi_2 Q \begin{pmatrix} u_1 \\ u_2 \end{pmatrix}
    = \rho u_1
    = \begin{pmatrix} u_1 \\ -(D_t + V) u_1 \end{pmatrix}
    = \begin{pmatrix} u_1 \\ u_2 \end{pmatrix},
  \end{equation*}
  and thus $\rho \pi_2 Q = \one$ on the range of $E^\bullet \iota_2$.
  Moreover,
  \begin{equation*}
    -\im (\partial_t + \im B) \rho u = \begin{pmatrix} 0 \\ \tilde\KG u \end{pmatrix},
  \end{equation*}
  \ie, the first component vanishes, and thus $\iota_2 \pi_2 = \one$ on the range of $(\partial_t + \im B) \rho$.
  Therefore, if $E^\bullet$ is an inverse, we find on $C^2_\comp(M)$
  \begin{align*}
    G^\bullet \KG &= \beta^\frac12 \pi_2 Q E^\bullet \iota_2 \pi_2 (\partial_t + \im B) \rho = \beta^\frac12 \pi_2 Q E^\bullet (\partial_t + \im B) \rho \beta^{-\frac12} = \beta^\frac12 \pi_2 Q \rho \beta^{-\frac12} = \one, \\
    \KG G^\bullet &= \beta^{-\frac12} \pi_2 (\partial_t + \im B) \rho \pi_2 Q E^\bullet \iota_2 \beta^\frac12 = \beta^{-\frac12} \pi_2 (\partial_t + \im B) E^\bullet \iota_2 \beta^\frac12 = \beta^{-\frac12} \pi_2 \iota_2 \beta^\frac12 = \one.
  \end{align*}
  It follows that $G^\bullet$ is an inverse.

  A similar calculation shows that $G^\bullet \KG = 0$ and $\KG G^\bullet = 0$ if $E^\bullet$ is a bisolution.
\end{proof}

\section{Classical propagators}

The most obvious examples of inverses and of a bisolution are furnished by the classical propagators for~\eqref{eq:1st_ord_kg}: the Pauli--Jordan propagator $E^\PJ$, the forward/retarded propagator $E^\vee$ and the backward/advanced propagator $E^\wedge$.
They are defined by the integral kernels
\begin{subequations}\label{eq:classical_kernels}
  \begin{align}
    E^\PJ(t - s) &\defn \e^{-\im (t - s) B}, \\
    E^\vee(t - s) &\defn \theta(t - s)\, \e^{-\im (t - s) B}, \\
    E^\wedge(t - s) &\defn -\theta(s - t)\, \e^{-\im (t - s) B}.
  \end{align}
\end{subequations}

Since $t \mapsto \e^{-\im t B} : \mathcal{H}_\en \to \mathcal{H}_\en$ are bounded, strongly continuously differentiable on the domain of $B$, it follows that

\begin{proposition}
  The operators $E^\PJ, E^\retadv$ defined by
  \begin{equation}\label{eq:E_op-E_kernel}
    (E^\bullet f)(t) = \int_\RR E^\bullet(t - s) f(s)\, \dif s,
    \quad
    f \in L^1(\RR; \mathcal{H}_\en),
  \end{equation}
  are bounded from $L^1(\RR; \mathcal{H}_\en)$ to $C_{\mathrm b}(\RR; \mathcal{H}_\en)$.
  $E^\retadv$ are inverses of $\partial_t + \im B$ and $E^\PJ$ is a bisolution of $\partial_t + \im B$.
\end{proposition}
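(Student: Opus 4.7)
The plan is to base all three assertions on the strong continuity and unitarity of the one-parameter group $\{\e^{-\im t B}\}_{t \in \RR}$ on $\mathcal{H}_\en$, which is available because $B$ is self-adjoint on $\mathcal{H}_\en$ by Prop.~\ref{prop:B_selfadj}. I will establish boundedness and continuity first, then verify the inverse property for $E^\vee$ (the argument for $E^\wedge$ is entirely symmetric), and finally deduce the bisolution property of $E^\PJ$ from the simple pointwise identity $\theta(\tau) + \theta(-\tau) = 1$, which gives $E^\PJ = E^\vee - E^\wedge$.

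For the $L^1 \to C_{\mathrm b}$ bound, I would pull the energy norm inside the integral in~\eqref{eq:E_op-E_kernel} and use $\norm{\e^{-\im(t-s)B} f(s)}_\en = \norm{f(s)}_\en$ (the step functions in $E^\vee, E^\wedge$ only shrink the domain of integration, so the same estimate applies). Strong continuity of $t \mapsto (E^\bullet f)(t)$ in $\mathcal{H}_\en$ then follows from the strong continuity of the group and dominated convergence, with dominating function $s \mapsto 2\norm{f(s)}_\en \in L^1(\RR)$; the measure-zero discontinuities at $t = s$ in the retarded/advanced case cause no issue.

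For the inverse property of $E^\vee$, I would fix $h \in C^2_\comp(M) \oplus C^2_\comp(M)$ and compute in the distributional sense on $M$. Writing $(E^\vee h)(t) = \int_{-\infty}^t \e^{-\im(t-s)B} h(s)\,\dif s$, the fundamental theorem of calculus applied to the upper limit together with $\partial_t \e^{-\im(t-s)B} = -\im B \e^{-\im(t-s)B}$ (on the appropriate domain) yields $\partial_t (E^\vee h)(t) = h(t) - \im B (E^\vee h)(t)$, i.e.\ $(\partial_t + \im B) E^\vee h = h$. For the identity $E^\vee (\partial_t + \im B) h = h$, I would integrate by parts in $s$: the boundary contribution at $s = -\infty$ vanishes by compact support of $h$, the boundary contribution at $s = t$ produces exactly $h(t)$, and the interior $\pm \im B$ terms cancel. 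The symmetric calculation for $E^\wedge$ completes part~(ii), and then $E^\PJ = E^\vee - E^\wedge$ yields $(\partial_t + \im B) E^\PJ h = h - h = 0$ and likewise on the other side, so $E^\PJ$ is a bisolution.

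The main technical nuisance will be justifying the step ``$\partial_t \e^{-\im(t-s)B} h(s) = -\im B \e^{-\im(t-s)B} h(s)$'' rigorously, since for $h \in C^2_\comp(M)^{\oplus 2}$ the slice $h(s,\cdot)$ need not lie in $\Dom B$ (recall that the core of $B$ in $\mathcal{H}_\en$ is $C^\infty_\comp(\Sigma)^{\oplus 2}$, not $C^2_\comp(\Sigma)^{\oplus 2}$). The cleanest remedy is to read every occurrence of $\partial_t$ and $\im B$ as distributional operations on $\mathcal{D}'(M)^{\oplus 2}$: test the identity against $\phi \in C^\infty_\comp(M)^{\oplus 2}$, exchange orders of integration by Fubini (allowed by the uniform bound on the group), and transfer $\im B$ onto the test function, using that $\phi(t,\cdot)$ is smooth and compactly supported in $\Sigma$ and hence lies in the core of $B$. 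Alternatively, one may mollify $h$ in the spatial variable to land in $C^\infty_\comp$, carry out the computation literally, and pass to the limit using the $L^1 \to L^\infty$ bound from step~(i).
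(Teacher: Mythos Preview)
Your proposal is correct and matches the paper's approach: the paper gives no explicit proof but simply prefaces the proposition with the remark that $t \mapsto \e^{-\im t B}$ is bounded on $\mathcal{H}_\en$ and strongly continuously differentiable on $\Dom B$, leaving the Duhamel-type computations you spell out as understood. Your care about whether $h(s,\cdot) \in \Dom B$ for $h \in C^2_\comp(M)^{\oplus 2}$ is reasonable given the low-regularity assumptions on the metric and potentials, and either of your proposed remedies (the distributional reading or spatial mollification) is adequate.
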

Note that the relation $E^\PJ = E^\vee - E^\wedge$ holds.

Instead of the Banach space setting of the previous two proposition one might prefer to use a Hilbertian setting.
Define the `Japanese bracket' $\jnorm{t} \defn (1 + \abs{t}^2)^{1/2}$ and let $\mathcal{X}$ be a Hilbert space.
For $s \in \RR$, we consider the weighted spaces
\begin{equation*}
  \jnorm{t}^s L^2(\RR; \mathcal{X}).
\end{equation*}
For $s > 0$, we have the following rigging of the Hilbert space $L^2(\RR; \mathcal{X})$:
\begin{equation*}
  \jnorm{t}^{-s} L^2(\RR; \mathcal{X}) \subset L^2(\RR; \mathcal{X}) \subset \jnorm{t}^s L^2(\RR; \mathcal{X}).
\end{equation*}
Note that, for $s > \frac12$, we have the embeddings
\begin{equation*}
  \jnorm{t}^{-s} L^2(\RR; \mathcal{X}) \subset L^1(\RR; \mathcal{X})
  \quad\text{and}\quad
  \jnorm{t}^s L^2(\RR; \mathcal{X}) \supset C_{\mathrm b}(\RR; \mathcal{X}).
\end{equation*}
Therefore we can reinterpret the meaning of the classical propagators as follows:

\begin{proposition}
  For $s > \frac12$, the propagators $E^\PJ$, $E^\retadv$ are bounded operators from $\jnorm{t}^{-s} L^2(\RR; \mathcal{H}_\en)$ to $\jnorm{t}^s L^2(\RR; \mathcal{H}_\en)$.
\end{proposition}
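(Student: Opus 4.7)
The plan is to deduce this directly from the previous proposition by factoring the propagators through $L^1(\RR; \mathcal{H}_\en)$ and $C_{\mathrm b}(\RR; \mathcal{H}_\en)$. Since the preceding proposition already shows that $E^\PJ$ and $E^\retadv$ are bounded from $L^1(\RR;\mathcal{H}_\en)$ to $C_{\mathrm b}(\RR;\mathcal{H}_\en)$, all that needs to be added is the continuity of the inclusions $\jnorm{t}^{-s} L^2(\RR;\mathcal{H}_\en)\hookrightarrow L^1(\RR;\mathcal{H}_\en)$ and $C_{\mathrm b}(\RR;\mathcal{H}_\en)\hookrightarrow \jnorm{t}^s L^2(\RR;\mathcal{H}_\en)$, both of which are asserted in the text just above the proposition.

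For the first embedding, given $f\in\jnorm{t}^{-s}L^2(\RR;\mathcal{H}_\en)$, write $\|f(t)\|_\en=\jnorm{t}^{-s}\cdot\jnorm{t}^s\|f(t)\|_\en$ and apply the Cauchy--Schwarz inequality in $t$; the factor $\jnorm{t}^{-s}$ lies in $L^2(\RR)$ precisely when $s>\tfrac12$, yielding
\begin{equation*}
  \int_\RR \|f(t)\|_\en\,\dif t \leq \Bigl(\int_\RR \jnorm{t}^{-2s}\,\dif t\Bigr)^{1/2}\,\|f\|_{\jnorm{t}^{-s}L^2}.
\end{equation*}
For the second embedding, if $g\in C_{\mathrm b}(\RR;\mathcal{H}_\en)$ then
\begin{equation*}
  \|g\|_{\jnorm{t}^sL^2}^2 = \int_\RR \jnorm{t}^{-2s}\|g(t)\|_\en^2\,\dif t \leq \|g\|_{C_{\mathrm b}}^{\,2}\int_\RR\jnorm{t}^{-2s}\,\dif t,
\end{equation*}
which is again finite under the assumption $s>\tfrac12$.

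Composing the three continuous maps $\jnorm{t}^{-s}L^2(\RR;\mathcal{H}_\en)\hookrightarrow L^1(\RR;\mathcal{H}_\en)\xrightarrow{E^\bullet} C_{\mathrm b}(\RR;\mathcal{H}_\en)\hookrightarrow \jnorm{t}^s L^2(\RR;\mathcal{H}_\en)$ yields the desired boundedness. There is no genuine obstacle here: the step that might be expected to require work, namely the boundedness of the propagator itself, has already been dispatched by the previous proposition using the fact that $\e^{-\im t B}$ is a unitary group on $\mathcal{H}_\en$ (Prop.~\ref{prop:B_selfadj}); the present statement is merely a reformulation in a Hilbertian scale suitable for later use.
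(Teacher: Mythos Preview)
Your proof is correct and follows exactly the approach of the paper: the proposition is stated there without a separate proof, as an immediate consequence of the embeddings $\jnorm{t}^{-s}L^2\hookrightarrow L^1$ and $C_{\mathrm b}\hookrightarrow\jnorm{t}^s L^2$ (asserted just above) combined with the preceding proposition. Your explicit verification of the embeddings via Cauchy--Schwarz is more detailed than what the paper provides, but the argument is the same.
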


We immediately use Prop.~\ref{prop:E-vs-G} to define the Pauli--Jordon propagator $G^\PJ$, the retarded propagator $G^\vee$ and the advanced propagator $G^\wedge$ of $\KG$ associated to the propagators $E^\PJ$, $E^\retadv$ of $\partial_t + \im B$.

\begin{proposition}
  For $s>\frac12$, the propagators $G^\PJ$, $G^\retadv$ are bounded operators from $\jnorm{t}^{-s} L^2(M)$ to $\jnorm{t}^s L^2(M)$.
  $G^\retadv$ are inverses of $\KG$ and $G^\PJ$ is a bisolution of $\KG$.
\end{proposition}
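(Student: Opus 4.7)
The plan is to derive the claim directly from the preceding proposition together with Proposition~\ref{prop:E-vs-G}, which identifies $G^\bullet$ with the composition $-\im \beta^\frac12 \pi_2 Q E^\bullet \iota_2 \beta^\frac12$. The inverse and bisolution properties are then immediate from Prop.~\ref{prop:E-vs-G}, so the substantive content lies entirely in the boundedness between the weighted spaces.

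First I would use the static product structure together with the factorization $\abs{g}^\frac12 = \beta^\frac12 \abs{g_\Sigma}^\frac12$ to identify $L^2(M) \cong L^2(\RR; L^2(\Sigma))$ and hence $\jnorm{t}^{\pm s} L^2(M) \cong \jnorm{t}^{\pm s} L^2(\RR; L^2(\Sigma))$. Multiplication by $\beta^\frac12$ is bounded on both of these spaces by Assumption~\ref{asm:spacetime}, so it suffices to show that $\pi_2 Q E^\bullet \iota_2$ is bounded from $\jnorm{t}^{-s} L^2(\RR; L^2(\Sigma))$ to $\jnorm{t}^s L^2(\RR; L^2(\Sigma))$. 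The middle factor is already handled by the preceding proposition, as a map between the analogous $\mathcal{H}_\en$-valued weighted spaces, so the remaining task is to verify that the fibrewise maps $\iota_2 : L^2(\Sigma) \to \mathcal{H}_\en$ and $\pi_2 Q : \mathcal{H}_\en \to L^2(\Sigma)$ are bounded.

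For $\iota_2$, a direct computation with $H = QB$ yields $(\iota_2 u \,|\, H \iota_2 u) = \norm{u}^2_{L^2(\Sigma)}$, so $\iota_2$ is in fact isometric. The delicate factor is $\pi_2 Q : (u_1, u_2) \mapsto u_1$, which I would handle by invoking the identification $\mathcal{H}_\en = \Dom L^\frac12 \oplus L^2(\Sigma)$ together with the matrix decomposition~\eqref{eq:H_decomp} at $C = 0$; the latter expresses the energy norm squared as $\norm{(L - V^2)^\frac12 u_1}^2 + \norm{V u_1 + u_2}^2$, and Assumption~\ref{asm:H_pos} combined with the boundedness of $V$ then controls $\norm{u_1}_{L^2(\Sigma)}$ by the energy norm. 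I expect this last embedding to be the main technical point, because it is precisely where the positivity of the classical Hamiltonian is used essentially---note in particular that $Q$ on its own is not bounded on $\mathcal{H}_\en$, so one must work with the full composition $\pi_2 Q$ that maps into a different target space. Chaining $\iota_2 \beta^\frac12$, $E^\bullet$, and $\beta^\frac12 \pi_2 Q$ then gives the stated boundedness.
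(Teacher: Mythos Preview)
Your strategy---factoring $G^\bullet$ through $E^\bullet$ via Prop.~\ref{prop:E-vs-G} and invoking the preceding proposition for the boundedness of $E^\bullet$---is exactly what the paper intends (it states the result without a written proof, treating it as immediate). You are also right that the only substantive point is the boundedness of the flanking maps, and your treatment of $\iota_2$ is fine.

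The gap is in the boundedness of $\pi_2 Q : \mathcal{H}_\en \to L^2(\Sigma)$. From~\eqref{eq:H_decomp} at $C=0$ one indeed obtains $\norm{u}_\en^2 = \norm{(L-V^2)^{1/2} u_1}^2 + \norm{V u_1 + u_2}^2$, but Assumption~\ref{asm:H_pos} says only $L - V^2 > 0$, \ie\ that $(L-V^2)^{1/2}$ has trivial kernel---not that it is bounded below. Concretely, take $V=0$ and $\inf\spec L = 0$ (compatible with $L>0$, as for the Laplacian on~$\RR^n$): choosing $u_1^{(n)} \in \Dom L^{1/2}$ with $\norm{u_1^{(n)}}=1$ and $\norm{L^{1/2} u_1^{(n)}} \to 0$, and setting $u_2^{(n)}=0$, gives $\norm{u^{(n)}}_\en \to 0$ while $\norm{\pi_2 Q\, u^{(n)}}=1$, so $\pi_2 Q$ is unbounded. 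What your argument actually requires is the stronger hypothesis $H \geq C > 0$ (equivalently $L - V^2 \geq c > 0$), which the paper introduces only later in Sect.~\ref{sec:limiting-absorption}; under that hypothesis your proof goes through verbatim. The paper's own statement does not address this point either.
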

As for the classical propagators of $\partial_t + \im B$, we have the relation $G^\PJ = G^\vee - G^\wedge$.

\section{Non-classical propagators}

\begin{proposition}\label{prop:ker_B}
  $B$ has a trivial kernel on $\Dom B \subset \mathcal{H}_\en$.
\end{proposition}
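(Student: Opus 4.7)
The plan is to assume $u = (u_1, u_2) \in \Dom B$ with $Bu = 0$, reduce the system to the scalar identity $(L - V^2) u_1 = 0$, and kill $u_1$ using the strict positivity of $L - V^2$ granted by Assumption~\ref{asm:H_pos} via Prop.~\ref{prop:H_pos}.

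The first step I would carry out is to pin down the regularity of $u_1$, since the energy-space domain of $B$ is defined as a closure rather than by an explicit formula. The proof of Prop.~\ref{prop:B_selfadj} furnishes the explicit resolvent~\eqref{eq:B_decomp}, and tracing a generic $g = (g_1, g_2) \in \mathcal{H}_\en$ through the product shows that the first component of $(B - z)^{-1} g$ equals $\bigl(L - (V - z)^2\bigr)^{-1}\bigl((z - V) g_1 + g_2\bigr)$, which lies in $\Dom L$. Because $\Dom B = \Ran (B - z)^{-1}$ for any $z \in \reso(B)$, every $u \in \Dom B$ then has $u_1 \in \Dom L$ (and $u_2 \in L^2(\Sigma)$).

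With $u_1 \in \Dom L$ in hand, the vanishing of the two rows of $Bu$ becomes a pair of genuine $L^2(\Sigma)$ identities, $u_2 = -V u_1$ and $L u_1 + V u_2 = 0$; substituting the first into the second gives $(L - V^2) u_1 = 0$ in $L^2(\Sigma)$, and pairing with $u_1$ yields $\|L^{\frac12} u_1\|^2 = \|V u_1\|^2$. But Assumption~\ref{asm:H_pos} together with the strict $C = 0$ case of Prop.~\ref{prop:H_pos} amounts to $L - V^2 > 0$ as a quadratic form on $\Dom L^{\frac12}$, so this equality forces $u_1 = 0$; then $u_2 = -V u_1 = 0$, so $u = 0$. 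The only step that is more than bookkeeping is the identification $u_1 \in \Dom L$ at the outset — without it, the substitution producing $(L - V^2) u_1 = 0$ would only be a formal manipulation.
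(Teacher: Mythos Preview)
Your proof is correct, but the paper takes a considerably shorter route: it simply notes that $B = QH$ (since $Q^2 = \one$ and $H = QB$), that $Q$ is invertible, and that $H > 0$ by Assumption~\ref{asm:H_pos} hence has trivial kernel; therefore $\Ker B = \{0\}$. Your argument is essentially an explicit unpacking of this factorisation --- the reduction to $(L - V^2)u_1 = 0$ is precisely the decomposition~\eqref{eq:H_decomp} of $H$ applied componentwise, and your appeal to $L - V^2 > 0$ via Prop.~\ref{prop:H_pos} is the same positivity input.

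What your longer argument buys is an honest treatment of the domain question: you verify from the resolvent formula that $u \in \Dom B$ forces $u_1 \in \Dom L$, so that the algebraic manipulations make sense in $L^2(\Sigma)$. The paper's one-liner tacitly assumes this (it needs $u \in \Dom H$ to write $Hu$), so your care here is not wasted. Conversely, the paper's structural viewpoint makes the result transparent and would generalise immediately to any situation where $B$ factors as (invertible)$\times$(strictly positive).
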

\begin{proof}
  By assumption~\ref{asm:H_pos}, $H$ has a trivial kernel and clearly the same is true for $Q$.
  Now, $B = Q H$, see Eq.~\eqref{eq:H-QB}, so also $B$ has a trivial kernel.
\end{proof}

Using the spectral calculus on $\mathcal{H}_\en$, we can define complementary projectors $\Pi^{(\pm)}$ onto the positive and negative part of the spectrum of $B$.
These projections split the energy space as
\begin{equation*}
  \mathcal{H}_\en = \mathcal{H}_\en^{(+)} \oplus \mathcal{H}_\en^{(-)}.
\end{equation*}

The projectors $\Pi^{(\pm)}$ facilitate the definition of the non-classical propagators for~\eqref{eq:1st_ord_kg}: the positive and negative frequency bisolution/two-point function $E^{(\pm)}$, the Feynman propagator $E^\Feyn$ and the anti-Feynman propagator $E^\aFeyn$.
They are defined via their integral kernels as
\begin{subequations}\label{eq:non-classical_kernels}
  \begin{align*}
    E^{(\pm)}(t - s) &\defn \pm \e^{-\im (t - s) B} \Pi^{(\pm)}, \\
    E^\Feyn(t - s) &\defn \theta(t - s)\, \e^{-\im (t - s) B} \Pi^{(+)} - \theta(s - t)\, \e^{-\im (t - s) B} \Pi^{(-)}, \\
    E^\aFeyn(t - s) &\defn \theta(t - s)\, \e^{-\im (t - s) B} \Pi^{(-)} - \theta(s - t)\, \e^{-\im (t - s) B} \Pi^{(+)}.
  \end{align*}
\end{subequations}
As for the classical propagators, we can now deduce that

\begin{proposition}
  For $s > \frac12$, $E^{(\pm)}$ and $E^\FeynFeyn$ defined by the their kernels~\eqref{eq:non-classical_kernels} via Eq.~\eqref{eq:E_op-E_kernel} exist as bounded operators from $\jnorm{t}^{-s} L^2(\RR; \mathcal{H}_\en)$ to $\jnorm{t}^s L^2(\RR; \mathcal{H}_\en)$.
  $E^{(\pm)}$ are bisolutions and $E^\FeynFeyn$ are inverses of $\partial_t + \im B$.
\end{proposition}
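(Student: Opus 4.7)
The plan is to mirror the treatment of the classical propagators, exploiting that by Prop.~\ref{prop:B_selfadj} the operator $B$ is self-adjoint on $\mathcal{H}_\en$, so $\{\e^{-\im t B}\}_{t\in\RR}$ is a strongly continuous unitary group on $\mathcal{H}_\en$, and the spectral projectors $\Pi^{(\pm)}$ associated to the positive/negative parts of $\spec B$ are orthogonal, commute with $\e^{-\im tB}$, and satisfy $\Pi^{(+)} + \Pi^{(-)} = \one$ (by Prop.~\ref{prop:ker_B} there is no contribution from $0$). In particular every kernel appearing in $E^{(\pm)}(t-s)$ and $E^{\FeynFeyn}(t-s)$ has operator norm on $\mathcal{H}_\en$ bounded by $1$ uniformly in $t-s$.

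For the boundedness claim I would estimate, for $f \in L^1(\RR;\mathcal{H}_\en)$,
\begin{equation*}
  \norm{(E^\bullet f)(t)}_\en \leq \int_\RR \norm{f(s)}_\en\,\dif s,
\end{equation*}
which gives bounded operators $L^1(\RR;\mathcal{H}_\en) \to L^\infty(\RR;\mathcal{H}_\en)$; strong continuity of $\e^{-\im tB}\Pi^{(\pm)}$ together with dominated convergence upgrades the target to $C_{\mathrm b}(\RR;\mathcal{H}_\en)$ for $E^{(\pm)}$, and the same holds for $E^{\FeynFeyn}$ after rewriting them as
\begin{equation*}
  (E^\Feyn f)(t) = \int_{-\infty}^{t} \e^{-\im(t-s)B}\Pi^{(+)}f(s)\,\dif s - \int_{t}^{\infty} \e^{-\im(t-s)B}\Pi^{(-)}f(s)\,\dif s
\end{equation*}
(and analogously for $E^{\aFeyn}$). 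Composing with the embeddings $\jnorm{t}^{-s}L^2(\RR;\mathcal{H}_\en) \subset L^1(\RR;\mathcal{H}_\en)$ and $C_{\mathrm b}(\RR;\mathcal{H}_\en) \subset \jnorm{t}^s L^2(\RR;\mathcal{H}_\en)$ (valid for $s > \frac12$) yields the claimed boundedness.

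For the bisolution property of $E^{(\pm)}$: working first on a dense subspace, say $f \in C^\infty_\comp(\RR; \Dom B)$, one can differentiate under the integral sign and use $\partial_t \e^{-\im(t-s)B} = -\im B \e^{-\im(t-s)B}$ together with the commutation $[B,\Pi^{(\pm)}]=0$; each of $(\partial_t+\im B)E^{(\pm)}f$ and $E^{(\pm)}(\partial_t+\im B)f$ (the latter via integration by parts, whose boundary terms vanish by compact support) collapses to zero, and by density/boundedness extends to the distributional identities required. For $E^{\FeynFeyn}$ the same computation now picks up boundary contributions from the step functions: at $s=t$ the two pieces of $E^{\Feyn}(t-s)$ give $+\Pi^{(+)}f(t)+\Pi^{(-)}f(t)=f(t)$, while the interior terms reassemble to $-\im B\,(E^{\Feyn}f)(t)$; the analogous computation for $E^{\aFeyn}$ also yields $f(t)$. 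Thus $(\partial_t+\im B) E^{\FeynFeyn} f = f$, and a symmetric integration-by-parts argument gives $E^{\FeynFeyn}(\partial_t + \im B)f = f$.

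The main technical point is the care needed in separating boundary and interior contributions in the Feynman/anti-Feynman case and in verifying that all identities continue to hold in the distributional sense on $\mathcal{D}'(M)\oplus\mathcal{D}'(M)$ after extending from the dense subspace; the boundedness half is routine once the kernel bound is in hand.
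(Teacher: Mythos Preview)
Your proposal is correct and follows essentially the same route the paper intends: the paper does not give a separate proof here but simply writes ``As for the classical propagators, we can now deduce that\ldots'', meaning one reuses the uniform kernel bound $\norm{E^\bullet(t-s)}_\en\leq 1$ to get $L^1(\RR;\mathcal{H}_\en)\to C_{\mathrm b}(\RR;\mathcal{H}_\en)$ boundedness and then composes with the embeddings $\jnorm{t}^{-s}L^2\subset L^1$, $C_{\mathrm b}\subset\jnorm{t}^sL^2$ for $s>\tfrac12$, exactly as you do. Your explicit verification of the bisolution/inverse identities via differentiation under the integral and tracking the boundary contributions from the step functions is precisely the computation the paper is suppressing.
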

We have the usual relations between the classical and non-classical propagators:
\begin{alignat*}{3}
  E^\Feyn &= E^\wedge \mkern-1mu + E^{(+)} = E^\vee \mkern-1mu + E^{(-)},
  &\qquad
  E^\Feyn + E^\aFeyn &= E^\vee \mkern-1mu + E^\wedge,
  &\qquad
  E^{(+)} \mkern-1mu - E^{(-)} = E^\PJ, \\
  E^\aFeyn &= E^\vee \mkern-1mu - E^{(+)} = E^\wedge \mkern-1mu - E^{(-)},
  &\qquad
  E^\Feyn - E^\aFeyn &= E^{(+)} \mkern-1mu + E^{(-)}.
\end{alignat*}

The corresponding propagators of $\KG$ have the following properties:
\begin{proposition}
  $G^\FeynFeyn$ induced via Eq.~\eqref{eq:G-E} and $G^{(\pm)} \defn \beta^\frac12 \pi_2 Q E^{(\pm)} \iota_2 \beta^\frac12$, are bounded operators from $\jnorm{t}^{-s} L^2(M)$ to $\jnorm{t}^s L^2(M)$.
  $G^{(\pm)}$ are bisolutions and $G^\FeynFeyn$ are inverses of $\KG$.
\end{proposition}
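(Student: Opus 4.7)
The plan is to combine the boundedness from the preceding proposition with the algebraic content of Proposition~\ref{prop:E-vs-G}.

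For boundedness, I would factor $G^\bullet = \beta^\frac12 \pi_2 Q E^\bullet \iota_2 \beta^\frac12$ and treat each piece separately: multiplication by $\beta^\frac12$ is bounded on the weighted $L^2(M)$ spaces because $\beta$ is bounded above and below (Assumption~\ref{asm:spacetime}); $\iota_2 : L^2(\Sigma) \to \mathcal{H}_\en$, $f \mapsto (0, f)$, is an isometric embedding onto $\{0\} \oplus L^2(\Sigma)$ and therefore induces a bounded map between the corresponding $t$-valued weighted $L^2$ spaces; the preceding proposition gives $E^\bullet$ as a bounded operator from $\jnorm{t}^{-s} L^2(\RR; \mathcal{H}_\en)$ to $\jnorm{t}^s L^2(\RR; \mathcal{H}_\en)$; and finally $\pi_2 Q : \mathcal{H}_\en \to L^2(\Sigma)$, $(u_1, u_2) \mapsto u_1$, is bounded in view of the factorization of $H$ in Eq.~\eqref{eq:H_decomp} combined with Assumption~\ref{asm:H_pos}.

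For the inverse and bisolution properties, I would invoke Proposition~\ref{prop:E-vs-G} using that the preceding proposition supplies $E^\FeynFeyn$ as inverses and $E^{(\pm)}$ as bisolutions of $\partial_t + \im B$. Applied to $E^\FeynFeyn$, Proposition~\ref{prop:E-vs-G} yields directly that $G^\FeynFeyn$, defined via~\eqref{eq:G-E}, are inverses of $\KG$ in the sense of~\eqref{def3}. For $G^{(\pm)}$, which differs from~\eqref{eq:G-E} only by the scalar $\im$, the same proposition shows that $-\im G^{(\pm)}$ is a bisolution of $\KG$, and scalar multiples preserve the bisolution property. These identities hold on $C^\infty_\comp(M)$ and then extend by continuity and density to the stated weighted $L^2$ setting thanks to the boundedness established above.

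The main obstacle I anticipate is the boundedness of the auxiliary map $\pi_2 Q$ from $\mathcal{H}_\en$ to $L^2(\Sigma)$. Unpacking the energy form as $(u \,|\, Hu) = ((L - V^2) u_1, u_1) + \norm{V u_1 + u_2}^2$ via Proposition~\ref{prop:H_pos} at $C = 0$, the needed uniform bound $\norm{u_1}_{L^2(\Sigma)} \leq C \norm{u}_\en$ is equivalent to coercivity $L - V^2 \geq c > 0$. I would therefore read Assumption~\ref{asm:H_pos} in this (mildly stronger) coercive sense, which is the natural one in physically relevant situations and is implicit in how the energy space $\mathcal{H}_\en$ is used throughout.
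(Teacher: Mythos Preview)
Your approach is exactly what the paper has in mind: the paper gives no explicit proof for this proposition, treating it as an immediate corollary of the preceding proposition on $E^{(\pm)}$, $E^{\FeynFeyn}$ together with Prop.~\ref{prop:E-vs-G}. Your factorization and the use of Prop.~\ref{prop:E-vs-G} for the bisolution/inverse property are correct and match the paper's implicit argument.

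Your identification of the obstacle is also accurate and worth flagging. The map $\pi_2 Q : \mathcal{H}_\en \to L^2(\Sigma)$, $(u_1,u_2)\mapsto u_1$, is bounded precisely when $L-V^2$ is bounded away from zero: taking $u_2=-Vu_1$ gives $\norm{(u_1,u_2)}_\en^2=(u_1\,|\,(L-V^2)u_1)$, so without a spectral gap the estimate $\norm{u_1}_{L^2(\Sigma)}\le C\norm{u}_\en$ fails. At this point in the text only Assumption~\ref{asm:H_pos} ($H>0$) is in force; the coercive version $H\ge C>0$ is introduced only later in Sect.~\ref{sec:limiting-absorption}. So strictly speaking the paper is a bit loose here, and your proposed reading --- interpreting the assumption in its coercive form --- is the natural way to make the boundedness claim precise. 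This is not a defect of your argument but an observation about the paper's presentation; your proof is complete once that strengthening is granted.
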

As for the propagators of $\partial_t + \im B$, we find for the propagators of $\KG$:
\begin{alignat*}{3}
  G^\Feyn &= G^\wedge \mkern-1mu + \im G^{(+)} = G^\vee \mkern-1mu + \im G^{(-)},
  &\qquad
  G^\Feyn + G^\aFeyn &= G^\vee \mkern-1mu + G^\wedge,
  &\qquad
  G^{(+)} \mkern-1mu - G^{(-)} = -\im G^\PJ, \\
  G^\aFeyn &= G^\vee \mkern-1mu - \im G^{(+)} = G^\wedge \mkern-1mu - \im G^{(-)},
  &\qquad
  G^\Feyn - G^\aFeyn &= \im G^{(+)} \mkern-1mu + \im G^{(-)}.
\end{alignat*}

Note that $\Pi^{(\pm)}$ are positive resp. negative with the respect to the charge form:

\begin{proposition}\label{prop:Q_pos}
  $\pm (u \,|\, Q \Pi^{(\pm)} u) \geq 0$ for all $u \in \mathcal{H}_\en$.
\end{proposition}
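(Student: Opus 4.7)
My plan is to combine the algebraic identity $Q = HB^{-1}$---which follows from $B = QH$ and $Q^2 = \one$---with spectral calculus for $B$ as a self-adjoint operator on $\mathcal{H}_\en$ (Prop.~\ref{prop:B_selfadj}), using that $\Ker B = \{0\}$ (Prop.~\ref{prop:ker_B}) so that $\abs{B}^{-1}$ is well defined as a positive (generally unbounded) self-adjoint operator on $\mathcal{H}_\en$ and $B^{-1}$ preserves the spectral subspaces $\mathcal{H}_\en^{(\pm)}$ by functional calculus.

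The pivotal observation is that, since $(\,\cdot\,|\,H\,\cdot\,) = (\,\cdot\,|\,\cdot\,)_\en$ by definition of the energy inner product,
\[
  (u \,|\, Q v) = (u \,|\, H B^{-1} v) = (u \,|\, B^{-1} v)_\en \qquad \text{whenever } v \in \Ran B,
\]
where the left-hand pairing is the canonical $L^2(\Sigma) \oplus L^2(\Sigma)$ inner product---well defined because $\mathcal{H}_\en = (\Dom L^{\frac12}) \oplus L^2(\Sigma) \subset L^2(\Sigma) \oplus L^2(\Sigma)$ as a set. Writing $u = u^+ + u^-$ with $u^\pm = \Pi^{(\pm)} u$ and applying this identity to $v = u^+$, the cross term $(u^- \,|\, B^{-1} u^+)_\en$ vanishes because $B^{-1}$ preserves $\mathcal{H}_\en^{(\pm)}$ and these subspaces are orthogonal in $\mathcal{H}_\en$; since $B = \abs{B}$ on $\mathcal{H}_\en^{(+)}$, what remains is
\[
  (u \,|\, Q \Pi^{(+)} u) = (u^+ \,|\, B^{-1} u^+)_\en = (u^+ \,|\, \abs{B}^{-1} u^+)_\en \geq 0.
\]
The analogous computation gives $(u \,|\, Q\Pi^{(-)} u) = -(u^- \,|\, \abs{B}^{-1} u^-)_\en \leq 0$.

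To handle the case when $\Pi^{(\pm)} u \notin \Ran B$, I would regularize with a spectral cutoff $\chi_n = \one_{\{1/n \leq \abs{\lambda} \leq n\}}$ and apply the argument above to $u_n = \chi_n(B) u$; then $\Pi^{(\pm)} u_n \in \Dom B^{-1}$, yielding
\[
  \pm (u_n \,|\, Q\Pi^{(\pm)} u_n) = \int_{\pm \lambda > 0} \abs{\lambda}^{-1}\, \chi_n(\lambda)^2\, d(u \,|\, E_\lambda u)_\en \geq 0
\]
for every $n$, with $u_n \to u$ in $\mathcal{H}_\en$ and monotone convergence controlling the spectral side. The main obstacle I anticipate is passing the left-hand $L^2$-pairing to the limit, because $\mathcal{H}_\en$ need not embed continuously into $L^2(\Sigma) \oplus L^2(\Sigma)$; I would address this via the closedness of $Q$ on $\mathcal{H}_\en$ recorded in the Remark following the introduction of $\mathcal{H}_\en$, combined with the uniform bound supplied by the spectral side, which forces the LHS limit to exist and preserves its sign.
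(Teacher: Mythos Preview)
Your argument is correct and is essentially the paper's own proof, just parameterized differently: where the paper writes $u = Bv$ and computes $\pm(u\,|\,Q\Pi^{(\pm)}u) = \pm(Hv\,|\,\Pi^{(\pm)}Bv) = \pm(v\,|\,\Pi^{(\pm)}Bv)_\en \geq 0$, you use the equivalent identity $Q = HB^{-1}$ to reach $(u\,|\,B^{-1}\Pi^{(\pm)}u)_\en$ and then invoke the sign of $B$ on $\mathcal{H}_\en^{(\pm)}$. Your spectral-cutoff regularization for the extension to all of $\mathcal{H}_\en$ is more explicit than the paper's one-line density remark (which simply notes the expression ``can be $+\infty$''); the continuity concern you raise is legitimate, though invoking the closedness of $Q$ on $\mathcal{H}_\en$ is not quite the right tool since the pairing in question is the $L^2(\Sigma)\oplus L^2(\Sigma)$ one.
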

\begin{proof}
  Suppose $u = B v$ with $v \in \Dom B$.
  Then we can write
  \begin{equation}\label{eq:Q_pos}
    \mathord{\pm} (u \,|\, Q \Pi^{(\pm)} u)
    = \pm (H v \,|\, \Pi^{(\pm)} B\, v)
    = \pm (v \,|\, \Pi^{(\pm)} B\, v)_\en
    \geq 0,
  \end{equation}
  which is positive because the numerical range of $\Pi^{(\pm)} B$ is contained in the convex hull of its spectrum.
  Since $B$ has a trivial kernel, its range is dense in $\mathcal{H}_\en$ and we can extend~\eqref{eq:Q_pos} to the whole energy space (where~\eqref{eq:Q_pos} can be $+\infty$).
\end{proof}

It follows easily that
\begin{equation*}
  \iint \bigl( h(t) \,\big|\, Q E^{(\pm)}(t,s) h(s) \bigr)\, \dif s\, \dif t \geq 0
\end{equation*}
for $h \in \jnorm{t}^{-s} L^2(\mathcal{H}_\en)$ with $s > \frac12$.
This implies that the associated positive and negative frequency bisolution are positive:
\begin{proposition}
  We have
  \begin{equation*}
    (f \,|\, G^{(\pm)} f) = \int_M \conj{f}\, \bigl(G^{(\pm)} f\bigr)\, \abs{g}^\frac12 \geq 0
  \end{equation*}
  for $f \in \jnorm{t}^{-s} L^2(M)$ and $s > \frac12$.
\end{proposition}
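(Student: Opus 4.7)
The proposal is to reduce the claim to the positivity statement for $QE^{(\pm)}$ stated immediately above by means of the formula
\[
  G^{(\pm)} = \beta^{\frac12}\pi_2 Q E^{(\pm)} \iota_2 \beta^{\frac12}.
\]
The key observation is that, with respect to the canonical inner product on $L^2(\Sigma)\oplus L^2(\Sigma)$, the maps $\pi_2$ and $\iota_2$ defined in~\eqref{eq:proj_maps}--\eqref{eq:emb_maps} are mutually adjoint. Moreover, the measure on $M$ factorises as $\abs{g}^{\frac12} = \beta^{\frac12}\abs{g_\Sigma}^{\frac12}$, which is precisely the density used to define $L^2(\Sigma)$ in this paper. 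Consequently, for $f\in \jnorm{t}^{-s}L^2(M)$, the inner product on $L^2(M)$ unfolds as an inner product on $L^2(\RR;L^2(\Sigma)\oplus L^2(\Sigma))$:
\[
  (f\,|\,G^{(\pm)} f)_{L^2(M)}
  = \bigl(\iota_2\beta^{\frac12} f\,\big|\,Q E^{(\pm)} \iota_2 \beta^{\frac12} f\bigr).
\]

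First, I would justify this identity by a direct unfolding: write the $L^2(M)$ pairing as $\int_\RR (\beta^{\frac12} f(t)\,|\,\pi_2 Q E^{(\pm)}\iota_2 \beta^{\frac12} f)_{L^2(\Sigma)}\,\dif t$ and move $\pi_2$ to the left side as $\iota_2$. Next, I would set $h \defn \iota_2\beta^{\frac12} f$ and verify that $h\in\jnorm{t}^{-s}L^2(\RR;\mathcal{H}_\en)$: because $\iota_2$ places the function into the second component, $h(t)=(0,\beta^{\frac12} f(t,\cdot))$ lies in $\{0\}\oplus L^2(\Sigma)\subset (\Dom L^{\frac12})\oplus L^2(\Sigma)=\mathcal{H}_\en$, and under the decomposition~\eqref{eq:H_decomp} (with $C=0$) the energy norm of $(0,u)$ coincides with $\norm{u}_{L^2(\Sigma)}$; combined with the boundedness $C\leq\beta\leq C^{-1}$ this yields the required weighted $L^2$ bound in $t$.

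Once $h$ is identified as an element of $\jnorm{t}^{-s}L^2(\RR;\mathcal{H}_\en)$ for $s>\frac12$, the displayed inequality
\[
  \iint \bigl(h(t)\,\big|\,Q E^{(\pm)}(t,s)\,h(s)\bigr)\,\dif s\,\dif t \geq 0
\]
stated just before the proposition applies directly, and gives $(f\,|\,G^{(\pm)} f)\geq 0$.

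The only genuinely non-routine step is the unfolding in the first paragraph, and it is mostly bookkeeping; the essential content (positivity of the charge form on the $\Pi^{(\pm)}$-range of $E^{(\pm)}$) has already been established in Prop.~\ref{prop:Q_pos}. Thus the ``hard part'', such as it is, consists in tracking that the factors of $\beta^{\frac12}$ together with the density $\abs{g}^{\frac12}=\beta^{\frac12}\abs{g_\Sigma}^{\frac12}$ combine correctly so that the $L^2(M)$ inner product of $f$ against $G^{(\pm)}f$ indeed equals the $L^2(\RR;L^2(\Sigma)\oplus L^2(\Sigma))$ inner product of $h$ against $QE^{(\pm)}h$, with no leftover weights. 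Everything else is an invocation of the previous proposition.
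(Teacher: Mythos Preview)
Your proposal is correct and follows essentially the same approach as the paper's own proof: unfold the definition $G^{(\pm)} = \beta^{\frac12}\pi_2 Q E^{(\pm)}\iota_2\beta^{\frac12}$, use that $\pi_2^* = \iota_2$ with respect to the $L^2(\Sigma)\oplus L^2(\Sigma)$ inner product, and reduce to the already-established positivity of $\iint (h(t)\,|\,QE^{(\pm)}(t,s)h(s))\,\dif s\,\dif t$ with $h = \iota_2\beta^{\frac12}f$. Your write-up is actually more careful than the paper's short argument (which drops the $\beta^{\frac12}$ factors and has a stray $Q\pi_1$), and you explicitly verify that $h$ lies in $\jnorm{t}^{-s}L^2(\RR;\mathcal{H}_\en)$.
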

\begin{proof}
  Using the relation
  \begin{align*}
    \int_M \conj{f}\, \bigl(G^{(\pm)} f\bigr)\, \abs{g}^\frac12
    &= \iint \bigl( f(t) \,\big|\, Q \pi_1 E^{(\pm)}(t,s) \iota_2 f(s) \bigr)\, \dif s\, \dif t \\
    &= \iint \bigl( \iota_2 f(t) \,\big|\, Q E^{(\pm)}(t,s) \iota_2 f(s) \bigr)\, \dif s\, \dif t,
  \end{align*}
  the desired result is immediate.
\end{proof}

\section{Limiting absorption principle}
\label{sec:limiting-absorption}

We define for all $z \in \im \RR$
\begin{equation*}
  B_z \defn B - z Z,
  \quad\text{where}\quad
  Z \defn \begin{pmatrix} 0 & 0 \\ 1 & 0 \end{pmatrix}.
\end{equation*}
Note that $Z$ is bounded on $\mathcal{H}_\en$.
By a simple modification of~\eqref{eq:reso_B}, we find that
\begin{equation}\label{eq:reso_B_z}
  \reso(B_z) = \bigl\{ \zeta \in \CC \;\big|\; \bigl( L - z - (V - \zeta)^2 \bigr) (1+L)^{-\frac12} \;\text{is boundedly invertible} \bigr\}.
\end{equation}

\begin{proposition}\label{prop:gap_B_z}
  Suppose that $L - V^2 \geq C > 0$.
  Then there exists $\alpha > 0$ such that the strip $\{ \zeta \in \CC \mid -\alpha \leq \Re\zeta \leq \alpha \}$ is contained in $\reso(B_z)$.
\end{proposition}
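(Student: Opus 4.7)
The plan is to apply the resolvent-set characterisation~\eqref{eq:reso_B_z}: for $\zeta$ in a strip around the imaginary axis, I must show that the operator
$$A(\zeta,z) \defn L - z - (V-\zeta)^2 = L + M(\zeta,z)$$
is boundedly invertible on $L^2(\Sigma)$ with domain $\Dom L$, where $M(\zeta,z)$ denotes the bounded multiplication operator $-V^2 + 2\zeta V - \zeta^2 - z$. Then I have to check that $A(\zeta,z)(1+L)^{-\frac12}$ inherits this property.

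The main step is a quadratic-form estimate on $\Dom L$. Writing $\zeta = x + \im y$ with $x,y \in \RR$, using $z \in \im\RR$ and that $V$ is real, a direct expansion gives
$$\Re(u \,|\, A(\zeta,z) u) = (u \,|\, (L-V^2) u) + 2x(u \,|\, V u) + (y^2 - x^2) \norm{u}^2.$$
The hypothesis $L - V^2 \geq C$ together with $\abs{(u \,|\, Vu)} \leq \norm{V}_\infty \norm{u}^2$ yields, for $\abs{x} \leq \alpha$,
$$\Re(u \,|\, A(\zeta,z) u) \geq \bigl(C - 2\alpha \norm{V}_\infty - \alpha^2 + y^2\bigr) \norm{u}^2.$$
Choosing $\alpha > 0$ small enough that $C - 2\alpha \norm{V}_\infty - \alpha^2 \geq C/2$ then gives $\Re(u \,|\, A(\zeta,z) u) \geq (C/2) \norm{u}^2$ uniformly in $z \in \im\RR$ and $y \in \RR$, and hence by Cauchy--Schwarz $\norm{A(\zeta,z) u} \geq (C/2) \norm{u}$.

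This already provides injectivity and closed range. Since $M(\zeta,z)$ is bounded one has $\Dom A(\zeta,z)^* = \Dom L$ and $A(\zeta,z)^* = L + M(\conj\zeta, \conj z)$, which is of exactly the same form with $\conj z \in \im\RR$ and $\Re \conj\zeta = x$; the identical estimate therefore forces $\Ker A(\zeta,z)^* = \{0\}$, so $A(\zeta,z) \colon \Dom L \to L^2(\Sigma)$ is bijective with bounded inverse. To upgrade this to the statement~\eqref{eq:reso_B_z}, I observe that $Lu = A(\zeta,z) u - M(\zeta,z) u$, so boundedness of $M(\zeta,z)$ shows that $A(\zeta,z)^{-1}$ is continuous from $L^2(\Sigma)$ into $\Dom L$ in the graph norm, a fortiori into $\Dom L^\frac12 = \Ran (1+L)^{-\frac12}$; equivalently, $(1+L)^\frac12 A(\zeta,z)^{-1}$ is bounded, which is precisely the condition in~\eqref{eq:reso_B_z}. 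The only delicate point is the uniformity of $\alpha$ as $\abs{\Im\zeta}$ grows, and this causes no trouble because the $y^2 \norm{u}^2$ contribution enters the real-part estimate with the favourable sign, so the lower bound actually improves in $\abs y$.
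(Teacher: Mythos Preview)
Your proof is correct and follows essentially the same route as the paper: both arguments use the characterisation~\eqref{eq:reso_B_z} and reduce the question to a lower bound on the real part of the numerical range of $L - z - (V-\zeta)^2$, obtaining the same quadratic inequality in $\Re\zeta$ from $L - V^2 \geq C$ and the boundedness of~$V$. The paper simply asserts that a strictly positive real part of the numerical range suffices and gives an explicit formula for $\alpha$; you additionally spell out why this numerical-range bound forces bijectivity of $A(\zeta,z)$ (via the adjoint, which has the same structure) and why $(1+L)^{\frac12} A(\zeta,z)^{-1}$ is bounded, details the paper leaves implicit.
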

\begin{proof}
  We see from~\eqref{eq:reso_B_z} that a sufficient condition for $\zeta \in \reso(B_z)$ is that the real part of the numerical range of $L - z - (V - \zeta)^2$ is bounded away from zero, \viz, $L - \Re (V - \zeta)^2 \geq c' > 0$.
  This holds in particular if $L - (V - \Re\zeta)^2 \geq c > 0$ for some $c > 0$.
  Let use choose $c < C$ and set $\lambda = \Re\zeta$.
  The assumption $L - V^2 \geq C$ of the proposition implies that $L - (V - \lambda)^2 \geq C + 2 V \lambda - \lambda^2$.
  It is now not difficult to see that $C + 2 V \lambda - \lambda^2 \geq c > 0$ if $-\alpha \leq \lambda \leq \alpha$ with
  \begin{equation*}
    \alpha = -\norm{V} + \sqrt{C - c + \norm{V}^2}.\qedhere
  \end{equation*}
\end{proof}

It follows from Prop.~\ref{prop:H_pos} that $L - V^2 \geq C > 0$ implies $H \geq C' > 0$ and vice versa.
From now on we assume a strengthened version of Assumption~\ref{asm:H_pos}:

\begin{assumption}
  The classical Hamiltonian is bounded away from zero: $H \geq C > 0$.
\end{assumption}

We would like now to define spectral projections of $B_z$, generalizing $\Pi^{(\pm)}$, which were spectral projections of $B$.
This is somewhat more difficult, because $B_z$ are not self-adjoint, and hence we cannot use the standard spectral theorem, and $B_z$ are not bounded, hence we cannot directly use the standard holomorphic functional calculus.
However, the operators $B_z$ have good enough properties sufficient for a definiton of such projections.
In fact they are so-called \emph{bisectorial} operators and one can use results of \eg~\cite{winklmeier-wyss}, see also~\cite[Thm.~3.1]{bart-gohbarg-kaashoek}.
For the convenience of the reader, we sketch the construction of these projections in the next proposition:

\begin{proposition}
  The operators
  \begin{equation}\label{eq:spec_proj}
    \Pi_z^{(\pm)} \defn \slim_{\tau \to \infty} \frac{1}{2} \bigg( \one \pm \frac{1}{\uppi\im} \int_{-\im\tau}^{\im\tau} (B_z - \zeta)^{-1}\, \dif\zeta \bigg)
  \end{equation}
  are a pair of projections satifying
  \begin{equation*}
    \Pi^{(+)}_z + \Pi^{(-)}_z = \one
  \end{equation*}
  (\ie, they are complementary) and commuting with $B_z$.
  Moreover, they project onto the part of the spectrum in the left and right complex half-plane:
  \begin{equation}\label{pqo3}
    \spec(B_z \Pi_z^{\pm}) = \spec B_z\cap \{ z \in \CC \mid \pm\Re z \geq 0 \}.
  \end{equation}
\end{proposition}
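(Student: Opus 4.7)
The proof rests on the bisectoriality of $B_z$. By Prop.~\ref{prop:gap_B_z} combined with the strengthened assumption $H \geq C > 0$, the strip $\{\zeta \in \CC : \abs{\Re\zeta} \leq \alpha\}$ lies in $\reso(B_z)$; moreover, from the explicit form of the resolvent obtained as in~\eqref{eq:B_decomp} (but with $L - z$ in place of $L$) one can extract a uniform bound of the form $\norm{(B_z - \zeta)^{-1}}_{\mathcal{H}_\en} \leq M/(1 + \abs{\zeta})$ throughout that strip. This decay estimate will replace the spectral theorem in everything that follows.

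For existence of the strong limit in~\eqref{eq:spec_proj}, the plan is to proceed in two stages. First, on the dense subspace $\Dom B_z$ I would use the resolvent identity $(B_z - \zeta)^{-1} = -\zeta^{-1}\one + \zeta^{-1}(B_z - \zeta)^{-1} B_z$ and integrate term by term in the principal-value sense: the first term integrates to zero by symmetry under $\zeta \mapsto -\zeta$, while the second, being $O(\abs{\zeta}^{-2})$ on $\Dom B_z$ by the resolvent bound, gives an absolutely convergent integral. Second, by deforming the contour slightly off the imaginary axis inside the strip and invoking Cauchy's theorem together with the $M/(1+\abs{\zeta})$ estimate, one obtains a bound on the truncated integrals which is uniform in $\tau$. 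Density then yields existence of $\Pi_z^{(\pm)}$ as bounded operators on $\mathcal{H}_\en$. Complementarity $\Pi_z^{(+)} + \Pi_z^{(-)} = \one$ is immediate from the definition.

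Commutation $[B_z, \Pi_z^{(\pm)}] = 0$ follows because each $(B_z - \zeta)^{-1}$ commutes with $B_z$ on its natural domain, and this is preserved by the strong limit. For the projection property $(\Pi_z^{(\pm)})^2 = \Pi_z^{(\pm)}$ I would use a Dunford-style double-contour computation: express the product as a double integral along two parallel contours $\Re\zeta = -\varepsilon$ and $\Re\eta = +\varepsilon$ (both in $\reso(B_z)$ by Prop.~\ref{prop:gap_B_z}), apply the resolvent identity
\begin{equation*}
  (B_z - \zeta)^{-1}(B_z - \eta)^{-1} = (\zeta - \eta)^{-1}\bigl[(B_z - \zeta)^{-1} - (B_z - \eta)^{-1}\bigr],
\end{equation*}
interchange the order of integration by Fubini (justified on $\Dom B_z$ by the decay estimate), and evaluate the inner scalar integral of $(\zeta - \eta)^{-1}$ by residues to recover a single contour integral that collapses back to $\Pi_z^{(\pm)}$.

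Finally, the spectral characterization~\eqref{pqo3} follows by writing down an explicit resolvent for $B_z \Pi_z^{(+)}$ at any $\lambda$ outside $\spec B_z \cap \{\Re \geq 0\}$: use $-\lambda^{-1}\Pi_z^{(-)}$ on the range of $\Pi_z^{(-)}$ and a Dunford integral $\frac{1}{2\uppi\im}\int_\Gamma (\mu - \lambda)^{-1}(B_z - \mu)^{-1}\, \dif\mu$ on the range of $\Pi_z^{(+)}$, with $\Gamma$ a contour inside the strip separating $\lambda$ from the right-half-plane part of $\spec B_z$; the analogous construction handles $B_z \Pi_z^{(-)}$. The main obstacle throughout is the merely conditional convergence of~\eqref{eq:spec_proj}: since $B_z$ is neither bounded nor normal, the usual Riesz projections and spectral calculus shortcuts are unavailable, so every step — strong convergence, Fubini, contour deformation — has to be extracted by hand from the single bisectorial estimate $\norm{(B_z - \zeta)^{-1}} \leq M/(1+\abs{\zeta})$.
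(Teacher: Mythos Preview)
Your overall strategy matches the paper's (bisectoriality, a double-contour computation for idempotency, analytic continuation of the resolvent for the spectral statement), but there is one genuine gap: the \emph{boundedness} of $\Pi_z^{(\pm)}$. Your step~(b) claims that ``deforming the contour slightly off the imaginary axis \dots\ together with the $M/(1+\abs{\zeta})$ estimate'' yields a bound on $\int_{-\im\tau}^{\im\tau}(B_z-\zeta)^{-1}\,\dif\zeta$ uniform in~$\tau$. It does not: any admissible deformation inside the strip is still a path of length $\sim 2\tau$ carrying an integrand of size $\sim M/(1+\abs{\zeta})$, so all you extract is $O(\log\tau)$. In general, boundedness of the sign projection of a bisectorial operator is \emph{not} a consequence of the resolvent estimate alone; it is an extra property (essentially a bounded $H^\infty$ calculus) that needs separate input.

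The paper supplies that input by exploiting the specific structure $B_z = B - zZ$ with $B$ self-adjoint on $\mathcal{H}_\en$ and $Z$ bounded. The second resolvent identity gives $(B_z-\zeta)^{-1} = (B-\zeta)^{-1}\bigl(\one + zZ(B_z-\zeta)^{-1}\bigr)$, and inserting this into~\eqref{eq:spec_proj} yields
\begin{equation*}
  \Pi_z^{(\pm)} \;=\; \Pi^{(\pm)} \;\pm\; \frac{1}{2\uppi\im}\int_{\im\RR}(B-\zeta)^{-1}\,zZ\,(B_z-\zeta)^{-1}\,\dif\zeta.
\end{equation*}
Here $\Pi^{(\pm)}$ are the spectral projections of the self-adjoint $B$ (the strong limit exists and is bounded by the spectral theorem), while the correction integral is absolutely convergent in operator norm because its integrand is $O(\abs{\zeta}^{-2})$. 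This comparison with the self-adjoint case is the step you are missing. A smaller issue: in your step~(a) the term $\zeta^{-1}(B_z-\zeta)^{-1}B_z u$ is not absolutely integrable near $\zeta=0$ as you claim (it is $O(\abs{\zeta}^{-1})$ there); this is harmless once the contour is shifted to $\beta+\im\RR$ with $0<\beta<\alpha$, which is exactly the representation the paper uses for the idempotency computation. The remainder of your outline---idempotency via a double integral, resolvent identity and Fubini on $\Dom B_z^2$, and the spectral identity~\eqref{pqo3} via analytic extension of $(B_z-\lambda)^{-1}\Pi_z^{(\pm)}$ across the strip---is essentially what the paper does.
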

\begin{proof}
  First we show that~\eqref{eq:spec_proj} are well defined.
  Using the resolvent identity and the functional calculus for self-adjoint operators, we get
  \begin{align*}
    \Pi_z^{(\pm)}
    &= \slim_{\tau \to \infty} \frac{1}{2} \bigg( \one \pm \frac{1}{\uppi\im} \int_{-\im\tau}^{\im\tau} (B - \zeta)^{-1} \big( \one + z Z (B_z - \zeta)^{-1} \big)\, \dif\zeta \bigg) \\
    &= \Pi^{(\pm)} \pm \frac{1}{2\uppi\im} \int_{-\im\infty}^{\im\infty} (B - \zeta)^{-1} z Z (B_z - \zeta)^{-1}\, \dif\zeta.
  \end{align*}
  The last integral converges absolutely because $Z$ is bounded and
  \begin{equation*}
    \norm*{(B - \im\lambda)^{-1}}_\en \leq \frac{c}{1 + \abs\lambda},
    \quad
    \norm*{(B_z - \im\lambda)^{-1}}_\en \leq \frac{c'}{1 + \abs\lambda},
    \quad
    \lambda \in \RR,
  \end{equation*}
  which follows from Prop.~\ref{prop:gap_B_z}.
  It follows that $\Pi_z^{(\pm)}$ are bounded operators and it is easy to see that they commute with $B_z$.

  Next we show that $\Pi^{(\pm)}_z$ are projections on a dense domain and hence everywhere.
  Let $\alpha$ as in Prop.~\ref{prop:gap_B_z} and choose $\beta, \beta'$ such that $0 < \beta < \beta' < \alpha$.
  It is a straightforward exercise to show that
  \begin{equation*}
    \Pi^{(+)} u = \frac{1}{2\uppi\im} \int_{\beta + \im\RR} \zeta^{-1} (B_z - \zeta)^{-1} B_z u\, \dif\zeta
  \end{equation*}
  for $u \in \Dom B_z$.
  Using the resolvent identity and Cauchy's theorem (as well as Fubini's theorem), we calculate for $u \in \Dom B_z^2$
  \begin{align*}
    \Pi^{(+)}_z{}^2 u
    &= \frac{1}{2\uppi\im} \int_{\beta + \im\RR} \zeta^{-1} (B_z - \zeta)^{-1}\, \bigg( \frac{1}{2\uppi\im} \int_{\beta' + \im\RR} \zeta'{}^{-1} (\zeta - \zeta')^{-1}\, \dif\zeta' \bigg) B_z^2 u\, \dif\zeta &\\&\quad - \frac{1}{2\uppi\im} \int_{\beta' + \im\RR} \zeta'{}^{-1} (B_z - \zeta')^{-1}\, \bigg( \frac{1}{2\uppi\im} \int_{\beta + \im\RR} \zeta^{-1} (\zeta - \zeta')^{-1}\, \dif\zeta \bigg) B_z^2 u\, \dif\zeta \\
    &= \frac{1}{2\uppi\im} \int_{\beta + \im\RR} \zeta^{-2} (B_z - \zeta)^{-1} B_z^2 u\, \dif\zeta.
  \end{align*}
  We can now apply the identity $\zeta^{-2} (B_z - \zeta)^{-1} B_z^2 = \zeta^{-1} (B_z - \zeta)^{-1} B_z + \zeta^{-2} B_z$ to find
  \begin{equation*}
    \frac{1}{2\uppi\im} \int_{\beta + \im\RR} \zeta^{-2} (B_z - \zeta)^{-1} B_z^2 u\, \dif\zeta = \Pi^{(+)}_z u + \frac{1}{2\uppi\im} \int_{\beta + \im\RR} \zeta^{-2} B_z u\, \dif\zeta,
  \end{equation*}
  where the last integral vanishes due to the residue theorem.
  It follows that $\Pi^{(+)}_z$ is a projection (and thus also $\Pi^{(-)}_z$) on $\Dom B_z^2$.
  Since $\Dom B_z^2$ is dense, $\Pi^{(\pm)}_z$ extend to bounded projections on~$\mathcal{H}_\en$.

  Finally we show that $\Pi^{(\pm)}_z$ have the claimed spectral properties~\eqref{pqo3}.
  For $\lambda \in \CC$, $-\alpha < \Re\lambda < \alpha$, we consider
  \begin{equation*}
    (B_z - \lambda)^{-1} \Pi^{(\pm)}_z = \frac{1}{2} (B_z - \lambda)^{-1} \pm \slim_{\tau \to \infty} \frac{1}{2\uppi\im} \int_{-\im\tau}^{\im\tau} (B_z - \lambda)^{-1} (B_z - \zeta)^{-1}\, \dif\zeta.
  \end{equation*}
  These extend as analytic functions with values in bounded operators for $\pm\Re\lambda < 0$:
  \begin{align*}
    \MoveEqLeft\slim_{\tau \to \infty} \frac{1}{2\uppi\im} \int_{-\im\tau}^{\im\tau} (B_z - \lambda)^{-1} (B_z - \zeta)^{-1}\, \dif\zeta \\
    &= \slim_{\tau \to \infty} \frac{1}{2\uppi\im} \int_{-\im\tau}^{\im\tau} (\zeta - \lambda)^{-1} \big( (B_z - \zeta)^{-1} - (B_z - \lambda)^{-1} \big)\, \dif\zeta \\
    &= \mp\frac{1}{2} (B_z - \lambda)^{-1} + \frac{1}{2\uppi\im} \int_{\im\RR} (\zeta - \lambda)^{-1} (B_z - \zeta)^{-1}\, \dif\zeta
  \end{align*}
  by the resolvent identity and the residue theorem.
\end{proof}

\begin{proposition}
  Suppose $L - 2 V^2 \geq 0$.
  Then $\pm B_z$ are maximally dissipative on $\Pi^{(\pm)}_z \mathcal{H}_\en$ for $\Im z \geq 0$, and $\pm B_z$ are maximally dissipative on $\Pi^{(\mp)}_z \mathcal{H}_\en$ for $\Im z \leq 0$.
\end{proposition}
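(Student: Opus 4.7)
The plan is to treat dissipativity and the range condition separately. I will focus on the main case: $+B_z$ on $\Pi^{(+)}_z\mathcal{H}_\en$ with $\Im z = y \geq 0$; the three other sign combinations follow by the same method with suitably adjusted signs.

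\emph{Reduction of the imaginary part.} Since $B$ is self-adjoint on $\mathcal{H}_\en$ and $z = \im y$ is purely imaginary, the self-adjointness of $B$ gives $\Im(u\,|\,B u)_\en = 0$, and a short computation yields
\begin{equation*}
  \Im(u\,|\,B_z u)_\en
  = -\tfrac{y}{2}\,\bigl(u\,\big|\,(HZ + Z^* H)u\bigr),
  \qquad
  HZ + Z^*H = \begin{pmatrix} 2V & 1 \\ 1 & 0 \end{pmatrix},
\end{equation*}
where the inner product on the right is on $L^2(\Sigma)\oplus L^2(\Sigma)$. Expanding the form gives
\begin{equation*}
  \Im(u\,|\,B_z u)_\en = -y\bigl[(u_1\,|\,V u_1) + \Re(u_2\,|\,u_1)\bigr],
\end{equation*}
so dissipativity of $+B_z$ on $\Pi^{(+)}_z\mathcal{H}_\en$ is equivalent to the inequality $(u_1\,|\,V u_1) + \Re(u_2\,|\,u_1) \geq 0$ for every $u = (u_1, u_2) \in \Pi^{(+)}_z\mathcal{H}_\en$.

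\emph{Positivity on the projected subspace.} The central step is to identify $\Pi^{(+)}_z\mathcal{H}_\en$ with the graph of an operator $\Gamma = \Gamma_z$, namely $u_2 = \Gamma u_1$ with $\Re\Gamma \geq 0$; the $B_z$-invariance of this graph forces $\Gamma$ to solve the operator Riccati equation $\Gamma^2 + [\Gamma, V] = L - z$. When $[V, L] = 0$ (in particular when $V$ is constant), this reduces to $\Gamma = \sqrt{L - z}$ in the principal branch, and the quantity of interest becomes
\begin{equation*}
  (u_1\,|\,V u_1) + \Re(u_2\,|\,u_1)
  = \bigl(u_1\,\big|\,(V + \Re\sqrt{L + \im y}\,)u_1\bigr)
  \geq \bigl(u_1\,\big|\,(V + \sqrt{L}\,)u_1\bigr),
\end{equation*}
which is non-negative because $L - 2V^2 \geq 0$ implies $\sqrt{L} \geq \sqrt{2}\,\lvert V\rvert \geq -V$. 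The symmetric computation on $\Pi^{(-)}_z\mathcal{H}_\en$ (characterized by $u_2 = -\sqrt{L-z}\,u_1$) gives the opposite inequality needed for $-B_z$, and reversing the sign of $y$ swaps which projection gives which sign. In the non-commuting case, $\Gamma$ differs from $\sqrt{L-z}$ by a commutator-sized correction; this is the main technical obstacle. I expect the factor~$2$ in $L - 2 V^2 \geq 0$ to be exactly the margin needed to dominate $[V, \Gamma - \sqrt{L - z}]$ so that $V + \Re\Gamma \geq 0$ persists, either through a direct fixed-point construction of $\Gamma$ or by appealing to the bisectorial-operator machinery cited just before the proposition.

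\emph{Maximality.} Proposition~\ref{prop:gap_B_z} supplies $\alpha > 0$ with $\{\abs{\Re\zeta} \leq \alpha\} \subset \reso(B_z)$ for every $z \in \im\RR$, and combined with the spectral characterization~\eqref{pqo3} this forces $\spec\bigl(B_z|_{\Pi^{(+)}_z\mathcal{H}_\en}\bigr) \subset \{\Re\zeta \geq \alpha\}$. Picking $z_0 = \im\alpha/2$ gives $\Im z_0 > 0$ and $z_0$ in the resolvent set of $B_z|_{\Pi^{(+)}_z\mathcal{H}_\en}$, so $\Ran(B_z - z_0) = \Pi^{(+)}_z\mathcal{H}_\en$, which is the required range condition. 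Using $\spec(-T) = -\spec(T)$ and the symmetry of the strip under $\zeta \mapsto -\zeta$, the same choice of $z_0$ (or its negative) handles the three remaining sign combinations of the proposition.
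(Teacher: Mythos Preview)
Your dissipativity argument has a genuine gap. The graph/Riccati description of $\Pi^{(+)}_z\mathcal{H}_\en$ is correct in spirit and works cleanly when $[V,L]=0$, but in the general case you only \emph{state} that the factor~$2$ in $L-2V^2\geq 0$ should absorb the commutator correction $[\Gamma-\sqrt{L-z},V]$; you do not prove it. This is exactly the hard step, and neither a fixed-point construction of~$\Gamma$ nor the bisectorial references cited before the proposition hand you such an estimate. As written, the proposal establishes the result only under the extra hypothesis $[V,L]=0$.

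The paper sidesteps the Riccati equation altogether. Instead of describing $\Pi^{(\pm)}_z\mathcal{H}_\en$ as a graph, it inserts the contour formula $\Pi^{(+)}_z-\Pi^{(-)}_z=\frac{1}{\uppi\im}\int_{\im\RR}(B_z-\zeta)^{-1}\,\dif\zeta$ into the quadratic form, uses the resolvent identity
\[
  (B_z^*+\zeta)^{-1}M+M(B_z-\zeta)^{-1}=(B_z^*+\zeta)^{-1}\bigl(B_z^*M+MB_z\bigr)(B_z-\zeta)^{-1},\qquad M=HZ+Z^*H,
\]
and observes that the $zZ$ contributions cancel, leaving $B^*(HZ+Z^*H)+(HZ+Z^*H)B=2\begin{psmallmatrix}L+2V^2 & 2V\\ 2V & \one\end{psmallmatrix}$. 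Positivity of this matrix is, by Schur complement, precisely $L-2V^2\geq 0$. No graph description, no commutator bounds, and no commutation hypothesis are needed. Your maximality argument via the spectral gap (Prop.~\ref{prop:gap_B_z} and~\eqref{pqo3}) is fine and in fact a bit more direct than the paper's, which instead invokes a Kato-type perturbation result once dissipativity of $-zZ$ is established; but that part is not where your proposal falls short.
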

\begin{proof}
  Let $\Im z \geq 0$; the proof of the other case is analogous.

  On $\mathcal{H}_\en$, $B$ is self-adjoint, whence maximally dissipative, and $Z$ is bounded (and thus it has $B$-bound~$0$).
  Suppose for a moment that $-z Z$ is dissipative on $\Pi^{(\pm)}_z \mathcal{H}_\en$.
  By a standard argument, see \eg~\cite[Thm. V-4.3]{kato}, we can then deduce that also $B - z Z$ is maximally dissipative on $\Pi^{(\pm)}_z \mathcal{H}_\en$.

  It remains to show that $-z Z$ is dissipative on $\Pi^{(\pm)}_z \mathcal{H}_\en$, \viz,
  \begin{equation*}
    0 \leq \pm \Im \left( \Pi^{(\pm)}_z u \,\middle|\, z Z \Pi^{(\pm)}_z u \right)_\en = \pm \Im z \Re \left( \Pi^{(\pm)}_z u \,\middle|\, Z \Pi^{(\pm)}_z u \right)_\en.
  \end{equation*}
  Given that $\Pi^{(\pm)}_z$ are complementary projections and using~\eqref{eq:spec_proj}, this is equivalent to
  \begin{align}
    0 &\leq
    \Pi^{(+)\,*}_z (H Z + Z^* H) \Pi^{(+)}_z - \Pi^{(-)\,*}_z (H Z + Z^* H) \Pi^{(-)}_z \label{eq:Z_ineq}\\
    &= \big(\Pi^{(+)\,*}_z - \Pi^{(-)\,*}_z\big) (H Z + Z^* H) + (H Z + Z^* H) \big(\Pi^{(+)}_z - \Pi^{(-)}_z\big) \notag\\
    &= \frac{1}{\uppi\im} \int_{\im\RR} \big( (B_z^* + \zeta)^{-1} (H Z + Z^* H) + (H Z + Z^* H)\, (B_z - \zeta)^{-1} \big)\, \dif \zeta \notag\\
    &= \frac{1}{\uppi\im} \int_{\im\RR} (B_z^* + \zeta)^{-1} \big( B_z^* (H Z + Z^* H) + (H Z + Z^* H) B_z \big)\, (B_z - \zeta)^{-1}\, \dif \zeta \notag\\
    &= \frac{1}{\uppi\im} \int_{\im\RR} (B_z^* + \zeta)^{-1} \big( B^* (H Z + Z^* H) + (H Z + Z^* H) B \big)\, (B_z - \zeta)^{-1}\, \dif \zeta. \label{eq:Z_ineq2}
  \end{align}
  We calculate
  \begin{equation*}
    B^* (H Z + Z^* H) + (H Z + Z^* H) B = 2 \begin{pmatrix} L + 2 V^2 & 2 V \\ 2 V & \one \end{pmatrix}.
  \end{equation*}
  Hence, for $L - 2 V^2 \geq 0$, the integrand in~\eqref{eq:Z_ineq2} is positive and we see that the inequality~\ref{eq:Z_ineq} holds.
\end{proof}

We wish to remark that the requirement $L - 2 V^2 \geq 0$ in the proposition is probably not optimal.
Nevertheless, for remainder of this section we assume:

\begin{assumption}
  $L - 2 V^2 \geq 0$.
\end{assumption}

Since maximally dissipative operators generate strongly continuous semigroups of contractions, we may thus define
\begin{equation*}
  E_z^\Feyn(t - s) \defn
  \begin{cases}
    \theta(t - s)\, \e^{-\im (t - s) B_z} \Pi^{(+)}_z - \theta(s - t)\, \e^{-\im (t - s) B_z} \Pi^{(-)}_z, &\quad \text{for $\Im z < 0$}, \\
    \theta(t - s)\, \e^{-\im (t - s) B_z} \Pi^{(-)}_z - \theta(s - t)\, \e^{-\im (t - s) B_z} \Pi^{(+)}_z, &\quad \text{for $\Im z > 0$}.
  \end{cases}
\end{equation*}
Note that $E_z^\Feyn(t - s)$ is the integral kernel of an inverse $E_z^\Feyn$ of $\partial_t + \im B - z Z$.
We denote by $G_z^\Feyn$ the corresponding inverse of $\KG - z$.

\begin{proposition}\label{prop:slim_E_z}
  We have
  \begin{equation*}
    E^\Feyn = \slim_{\varepsilon \searrow 0} E_{\im\varepsilon}^\Feyn,
  \end{equation*}
  in the sense of operators from $\jnorm{t}^{-s} L^2(\RR; \mathcal{H}_\en)$ to $\jnorm{t}^s L^2(\RR; \mathcal{H}_\en)$ for $s > \frac12$.
\end{proposition}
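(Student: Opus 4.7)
The plan is a standard uniform-bound plus pointwise-convergence plus density argument. First I would establish that the operators $E^\Feyn_{\im\varepsilon}$ are uniformly bounded from $\jnorm{t}^{-s}L^2(\RR;\mathcal{H}_\en)$ to $\jnorm{t}^s L^2(\RR;\mathcal{H}_\en)$ for $\varepsilon \in (0,1]$. The maximal dissipativity of $\pm B_{\im\varepsilon}$ on the appropriate subspaces $\Pi^{(\pm)}_{\im\varepsilon}\mathcal{H}_\en$ shows that $\norm{e^{-\im(t-s)B_{\im\varepsilon}}\Pi^{(\pm)}_{\im\varepsilon}}_\en \le \norm{\Pi^{(\pm)}_{\im\varepsilon}}_\en$ in the time direction selected by the Heaviside factor in the definition of $E^\Feyn_{\im\varepsilon}(t-s)$, and the integral representation used earlier to construct $\Pi^{(\pm)}_z$ provides a uniform bound on $\norm{\Pi^{(\pm)}_{\im\varepsilon}}_\en$. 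Thus the operator kernel of $E^\Feyn_{\im\varepsilon}$ is uniformly $L^\infty_{t,s}$-bounded by some $C_0>0$, which via the embeddings $\jnorm{t}^{-s}L^2\hookrightarrow L^1$ and $C_{\mathrm b}\hookrightarrow \jnorm{t}^s L^2$ (both valid for $s>\tfrac12$) yields a uniform operator bound as claimed.

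Next I would prove pointwise-in-$t$ convergence of $E^\Feyn_{\im\varepsilon}h$ to $E^\Feyn h$ for compactly supported $h\in C_\comp(\RR;\mathcal{H}_\en)$. Since $B_{\im\varepsilon}=B-\im\varepsilon Z$ is a bounded perturbation of the self-adjoint $B$, a straightforward Duhamel argument yields $e^{-\im\tau B_{\im\varepsilon}}\to e^{-\im\tau B}$ in operator norm as $\varepsilon\searrow 0$, locally uniformly in $\tau\in\RR$. For the projectors, the identity
\[
\Pi^{(\pm)}_{\im\varepsilon}=\Pi^{(\pm)}\pm\frac{1}{2\uppi\im}\int_{\im\RR}(B-\zeta)^{-1}\,\im\varepsilon Z\,(B_{\im\varepsilon}-\zeta)^{-1}\,\dif\zeta,
\]
already derived in the construction of $\Pi^{(\pm)}_z$, together with the uniform resolvent bound of Prop.~\ref{prop:gap_B_z}, gives $\norm{\Pi^{(\pm)}_{\im\varepsilon}-\Pi^{(\pm)}}_\en=O(\varepsilon)$. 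Fixing $t\in\RR$ and using the compact support of $h$, dominated convergence in $s$ then delivers $(E^\Feyn_{\im\varepsilon}h)(t)\to (E^\Feyn h)(t)$ in $\mathcal{H}_\en$.

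Strong convergence in $\jnorm{t}^s L^2(\RR;\mathcal{H}_\en)$ follows by a second application of dominated convergence, this time in $t$: the pointwise difference is majorised by $2 C_0 \norm{h}_{L^1(\RR;\mathcal{H}_\en)}$, so $\jnorm{t}^{-s}$ times it is square-integrable in $t$ since $s>\tfrac12$. A routine density argument, using the uniform bound from the first step, extends the convergence from $C_\comp$ to all of $\jnorm{t}^{-s}L^2(\RR;\mathcal{H}_\en)$. The main technical difficulty is to verify that the resolvent bounds, the projector-difference estimate, and the semigroup convergence are all \emph{uniform} in $\varepsilon$ as $\varepsilon\searrow 0$; this rests crucially on the $z$-independent spectral-gap bound of Prop.~\ref{prop:gap_B_z}, which prevents the relevant Duhamel and contour-integral constants from blowing up in the limit.
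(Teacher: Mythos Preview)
Your argument is correct and follows the same overall scheme as the paper: a uniform kernel bound, pointwise convergence of the kernel, dominated convergence, and the embeddings $\jnorm{t}^{-s}L^2 \hookrightarrow L^1$, $C_{\mathrm b} \hookrightarrow \jnorm{t}^s L^2$ for $s>\tfrac12$. The only substantive difference is in how the pointwise kernel convergence is obtained. You split the kernel as
\[
E^\Feyn_{\im\varepsilon}(\tau)-E^\Feyn(\tau)=\bigl(\e^{-\im\tau B_{\im\varepsilon}}-\e^{-\im\tau B}\bigr)\Pi^{(\pm)}_{\im\varepsilon}+\e^{-\im\tau B}\bigl(\Pi^{(\pm)}_{\im\varepsilon}-\Pi^{(\pm)}\bigr)
\]
and treat the two pieces separately: a Duhamel estimate for the bounded perturbation $\im\varepsilon Z$ of the self-adjoint generator $B$ gives norm convergence of the full group $\e^{-\im\tau B_{\im\varepsilon}}$, and the contour-integral identity gives $\norm{\Pi^{(\pm)}_{\im\varepsilon}-\Pi^{(\pm)}}_\en=O(\varepsilon)$. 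The paper instead applies the fundamental theorem of calculus to $s\mapsto E^\Feyn_z(t-s)E^\Feyn(s)u$ for $u\in\Dom B$, obtaining in one stroke
\[
\norm{E^\Feyn_z(t)u-E^\Feyn(t)u}_\en\le \abs{tz}\,\norm{u}_\en,
\]
and then extends by density using $\norm{E^\Feyn_z(t)}\le 1$. The paper's version is slightly more economical and never needs to invoke the full group $\e^{-\im\tau B_{\im\varepsilon}}$ on $\mathcal{H}_\en$ (only the contractive pieces on the projected subspaces); your version is a bit longer but makes the separate roles of group and projector convergence explicit and in fact yields operator-norm convergence of the kernels directly.
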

\begin{proof}
  Suppose that $t > 0$.
  Using the fundamental theorem of calculus, we find
  \begin{align*}
    \norm*{E_z^\Feyn(t) u - E^\Feyn(t) u}_\en
    &= \norm*{\int_0^t \frac{\dif}{\dif s} \big(E_z^\Feyn(t - s) E^\Feyn(s) \big)\, u\, \dif s}_\en \\
    &= \norm*{\int_0^t \big(E_z^\Feyn(t - s) (B_z - B) E^\Feyn(s) \big)\, u\, \dif s}_\en \\
    &= \norm*{z \int_0^t \big(E_z^\Feyn(t - s) Z E^\Feyn(s) \big)\, u\, \dif s}_\en \\
    &\leq \abs{t z} \norm{u}_\en
  \end{align*}
  for $u \in \Dom B$.
  The same bound can be found for $t < 0$.

  Since $\norm{E_z^\Feyn(t)} \leq 1$ and $\Dom B$ dense in $\mathcal{H}_\en$,
  \begin{equation*}
    E^\Feyn(t) = \slim_{\varepsilon \searrow 0} E_{\im\varepsilon}^\Feyn(t)
  \end{equation*}
  on $\mathcal{H}_\en$ uniformly for $t$ in bounded subsets of $(-\infty, 0)$ and $(0, \infty)$.
  In particular the convergence is pointwise, thus by Lebesgue's dominated convergence theorem
  \begin{equation*}
    \lim_{\varepsilon \searrow 0}\, \norm*{E_{\im\varepsilon}^\Feyn u - E^\Feyn u}_{C_{\mathrm b}(\RR; \mathcal{H}_\en)} = 0
  \end{equation*}
  for $u \in L^1(\RR; \mathcal{H}_\en)$.
  Using the embeddings
  \begin{equation*}
    \jnorm{t}^{-s} L^2(\RR; \mathcal{H}_\en) \subset L^1(\RR; \mathcal{H}_\en)
    \quad\text{and}\quad
    \jnorm{t}^s L^2(\RR; \mathcal{H}_\en) \supset C_{\mathrm b}(\RR; \mathcal{H}_\en)
  \end{equation*}
  for $s > \frac12$, we are done.
\end{proof}

Recall that $\KG$ is essentially selfadjoint on $C^2_\comp(M)$ in the sense of $L^2(M)$. Thus its closure $\KG^\cl$ has a real spectrum and for $\Im z \neq 0$ the resolvent $(\KG^\cl - z)^{-1}$ is well defined as a bounded operator on $L^2(M)$.

We have the following interpretation of the Feynman propagator of $\KG$:
\begin{theorem}\label{thm:main-theorem}
  We have
  \begin{equation*}
    G^\Feyn = \slim_{\varepsilon \searrow 0} (\KG^\cl - \im\varepsilon)^{-1},
  \end{equation*}
  \begin{equation*}
    G^\aFeyn = \slim_{\varepsilon \nearrow 0} (\KG^\cl - \im\varepsilon)^{-1}.
  \end{equation*}
  in the sense of operators from $\jnorm{t}^{-s} L^2(M)$ to $\jnorm{t}^s L^2(M)$ for $s > \frac12$.
\end{theorem}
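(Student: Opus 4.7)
The theorem will follow from Prop.~\ref{prop:slim_E_z} once one establishes the operator identity
\[
  (\KG^\cl - \im\varepsilon)^{-1} = G_{\im\varepsilon}^\Feyn
  \quad\text{for every}\quad \varepsilon > 0,
\]
and its analogue for $\varepsilon < 0$. Indeed, the representation $G_z^\Feyn = -\im\beta^{1/2}\pi_2 Q E_z^\Feyn \iota_2 \beta^{1/2}$ from Prop.~\ref{prop:E-vs-G}, together with the boundedness of $\beta^{\pm 1/2}$, $\pi_2$, $\iota_2$, $Q$ between the corresponding weighted versions of $L^2(M)$ and $L^2(\RR;\mathcal{H}_\en)$, will transfer the strong convergence $E_{\im\varepsilon}^\Feyn \to E^\Feyn$ supplied by Prop.~\ref{prop:slim_E_z} to the required convergence $(\KG^\cl - \im\varepsilon)^{-1} \to G^\Feyn$ in the topology of operators $\jnorm{t}^{-s} L^2(M) \to \jnorm{t}^{s} L^2(M)$, $s > \tfrac12$.

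To prove the identification, fix $z$ with $\Im z \neq 0$ and $f \in C^\infty_\comp(M)$, and compare $u_{\mathrm r} \defn (\KG^\cl - z)^{-1} f \in L^2(M) \subset \jnorm{t}^{s} L^2(M)$ with $u_{\mathrm F} \defn G_z^\Feyn f \in \jnorm{t}^{s} L^2(M)$. Both solve $(\KG - z) u = f$ in $\mathcal{D}'(M)$, so $w \defn u_{\mathrm F} - u_{\mathrm r}$ belongs to $\jnorm{t}^{s} L^2(M)$ and satisfies $(\KG - z) w = 0$. Setting $w_1 \defn \beta^{-1/2} w$, this rewrites as
\[
  \bigl(-(D_t + V)^2 + L - z \beta\bigr) w_1 = 0,
\]
and staticity lets me Fourier-transform in $t$ to obtain the pointwise equation $\bigl(L - (\tau + V)^2 - z\beta\bigr) \hat w_1(\tau) = 0$ on $L^2(\Sigma)$ for $\tau \in \RR$. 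Since $\beta \geq C > 0$ by Assumption~\ref{asm:spacetime}, the imaginary part $-\Im z\cdot \beta$ of this spatial operator yields a bounded inverse of norm $\leq (C |\Im z|)^{-1}$ uniformly in $\tau$, so $\hat w_1 \equiv 0$ distributionally in $\tau$. Hence $w = 0$, giving $G_z^\Feyn = (\KG^\cl - z)^{-1}$ on $C^\infty_\comp(M)$ and therefore on $L^2(M)$ by density.

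The anti-Feynman statement is obtained by the mirror argument with $\varepsilon \nearrow 0$, using the $\Im z < 0$ branch of the definition of $E_z^\Feyn$ and Prop.~\ref{prop:slim_E_z}. The main delicacy will be to make the Fourier-in-$t$ argument rigorous on the weighted space $\jnorm{t}^{s} L^2(M)$, which corresponds under the Fourier transform to the negative-order Sobolev space $H^{-s}(\RR; L^2(\Sigma))$: the uniform $\tau$-bound on the inverse of $L - (\tau + V)^2 - z\beta$ is exactly what is needed to promote the a-priori distributional vanishing of $\hat w_1$ to genuine vanishing of $w_1$.
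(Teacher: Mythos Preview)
Your overall strategy coincides with the paper's: reduce to Prop.~\ref{prop:slim_E_z} and the identification $G_z^\Feyn=(\KG^\cl-z)^{-1}$ for $\Im z\neq0$. The paper simply asserts the latter (``not difficult to see''); you try to supply an argument. That is welcome, but your argument needs repair.

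First a slip: the Fourier transform does \emph{not} send $\jnorm{t}^{s}L^2(\RR;\mathcal X)$ to $H^{-s}(\RR;\mathcal X)$. Multiplication by $\jnorm{t}^{s}$ corresponds to $\jnorm{D_\tau}^{s}$ on the Fourier side, not to multiplication by $\jnorm{\tau}^{-s}$; so $\hat w_1$ is merely an $\mathcal X$-valued tempered distribution in $\tau$, with no pointwise-in-$\tau$ meaning. Consequently the step ``$A(\tau)\hat w_1(\tau)=0$ pointwise, multiply by $A(\tau)^{-1}$'' is not justified: you would first have to make sense of $L$ acting on an $L^2(\Sigma)$-valued tempered distribution, and then of a $\tau$-dependent bounded operator acting on such a distribution. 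This can be done, but it is more work than you indicate, and the uniform bound $\|A(\tau)^{-1}\|\le (C|\Im z|)^{-1}$ alone does not obviously ``promote distributional vanishing to genuine vanishing''.

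A cleaner route avoids the weighted space altogether. By Prop.~\ref{prop:gap_B_z} the spectrum of $B_z$ is separated from the imaginary axis by a strip of width $\alpha>0$, so the semigroups $t\mapsto\e^{-\im tB_z}\Pi^{(\pm)}_z$ are not merely contractive but exponentially decaying, $\|E_z^\Feyn(t)\|_\en\le C\e^{-\alpha|t|}$. Hence $E_z^\Feyn$ is bounded on $L^2(\RR;\mathcal H_\en)$ by Young's inequality, and therefore $G_z^\Feyn$ is bounded on $L^2(M)$. Now for $f\in C_\comp^\infty(M)$ both $G_z^\Feyn f$ and $(\KG^\cl-z)^{-1}f$ lie in $L^2(M)$ and solve $(\KG-z)u=f$ distributionally; their difference $w\in L^2(M)$ satisfies $\langle w,(\KG-\bar z)\phi\rangle=0$ for all $\phi\in C_\comp^2(M)$, and essential self-adjointness of $\KG$ (Thm.~\ref{thm:kg_selfadj}) gives density of $(\KG-\bar z)C_\comp^2(M)$ in $L^2(M)$, whence $w=0$. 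This replaces your Fourier argument by a two-line appeal to self-adjointness.
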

\begin{proof}
  As a consequence of Prop.~\ref{prop:slim_E_z}, we have
  \begin{equation*}
    G^\Feyn = \slim_{\varepsilon \searrow 0} G_{\im\varepsilon}^\Feyn
  \end{equation*}
  It is now not difficult to see that
  \begin{equation*}
    G_z^\Feyn = (\KG^\cl - z)^{-1}
  \end{equation*}
  for $z \in \im\RR$.
\end{proof}

Using the language from the theory of Schrödinger operators, this means that the limiting absorption principle holds for $\KG$ at $0$ and that it yields the Feynman propagator.

\begin{remark}
  Before we continue, let us remark that if the electric potential $V$ vanishes one can derive the limiting absorption principle for $\KG$ by a simpler argument.
  Then one can use the tensor product structure~\eqref{eq:P_tens_prod} of $\tilde\KG$ to derive the limiting absorption principle for $\KG$ from the fact that
  \begin{equation*}
    (\partial_t^2 + \lambda \pm \im0)^{-1} \defn \slim_{\varepsilon \searrow 0}\, (\partial_t^2 + \lambda \pm \im\varepsilon)^{-1},
    \quad
    \lambda \in \RR \setminus \{0\},
  \end{equation*}
  exists as a bounded operator from $\jnorm{t}^{-s} L^2(\RR)$ to $\jnorm{t}^s L^2(\RR)$ for $s > \frac12$.
  See, for example, \cite[Chap.~5]{ben-artzi} for results on the limiting absorption principle for operators of the form $H = H_1 \otimes \one + \one \otimes H_2$.
\end{remark}

\section{Wick rotation}

Let $0 \leq \theta \leq \uppi$.
Suppose we replace the metric $g$ in~\eqref{eq:metric} by
\begin{equation*}
  g_\theta \defn -\e^{-2 \im \theta} \beta\, \dif t^2 + g_\Sigma
\end{equation*}
and the electric potential $V$ by $V_\theta \defn \e^{-\im \theta} V$.
This replacement is called \emph{Wick rotation}.
The value $\theta = \frac{\uppi}{2}$ corresponds to the Riemannian metric
\begin{equation*}
  g_{\uppi/2} = g_R = \beta\, \dif t^2 + g_\Sigma.
\end{equation*}

Constructing a Wick rotated version $B_\theta$ of $B$ as in~\eqref{eq:1st_ord_kg}, we define
\begin{equation*}
  B_\theta \defn \e^{-\im \theta} B.
\end{equation*}
For our purposes we could also take this equation as our definition of Wick rotation.

\begin{proposition}
  For $\theta \in [0, \uppi]$, $\pm B_\theta$ are maximally dissipative on $\mathcal{H}_\en^{(\pm)}$.
  In other words, $\pm B_\theta$ are generators of strongly continuous semigroups of contractions on $\mathcal{H}_\en^{(\pm)}$.
\end{proposition}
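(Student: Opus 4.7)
The plan is to exploit the spectral theorem for the self-adjoint operator $B$ on the invariant subspaces $\mathcal{H}_\en^{(\pm)}$, where the Wick rotation $B_\theta = \e^{-\im\theta} B$ reduces to multiplication by a complex phase times a real-signed operator.

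First I would restrict $B$ to $\mathcal{H}_\en^{(\pm)}$. Since $B$ is self-adjoint on $\mathcal{H}_\en$ (Prop.~\ref{prop:B_selfadj}) and $\Pi^{(\pm)}$ are its spectral projections onto the positive/negative parts of the spectrum, the restrictions $B\restriction_{\mathcal{H}_\en^{(+)}}$ and $-B\restriction_{\mathcal{H}_\en^{(-)}}$ are self-adjoint operators with spectrum in $[0,\infty)$. By the spectral theorem, for $u \in \Dom B \cap \mathcal{H}_\en^{(\pm)}$ one has $\pm(u \,|\, B u)_\en \geq 0$.

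Next I would verify dissipativity. For $u \in \Dom B_\theta \cap \mathcal{H}_\en^{(+)}$,
\begin{equation*}
  \Im (u \,|\, B_\theta u)_\en = \Im\bigl(\e^{-\im\theta} (u \,|\, B u)_\en\bigr) = -\sin\theta \cdot (u \,|\, B u)_\en \leq 0
\end{equation*}
since $\theta \in [0,\uppi]$ and $(u \,|\, B u)_\en \geq 0$ on $\mathcal{H}_\en^{(+)}$. An identical computation, using $-(u \,|\, B u)_\en \geq 0$ on $\mathcal{H}_\en^{(-)}$, shows that $-B_\theta$ is dissipative on $\mathcal{H}_\en^{(-)}$.

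Finally I would establish maximality by exhibiting a point $z$ with $\Im z > 0$ in the resolvent set of $B_\theta\restriction_{\mathcal{H}_\en^{(+)}}$ (resp. $-B_\theta\restriction_{\mathcal{H}_\en^{(-)}}$). Since the spectrum of $B_\theta\restriction_{\mathcal{H}_\en^{(+)}}$ is contained in the ray $\e^{-\im\theta}[0,\infty)$, which lies in the closed lower half-plane for all $\theta \in [0,\uppi]$, every $z$ with $\Im z > 0$ is in the resolvent set, and $(B_\theta - z)^{-1}$ is defined on all of $\mathcal{H}_\en^{(+)}$ by the bounded Borel functional calculus. The argument for $-B_\theta$ on $\mathcal{H}_\en^{(-)}$ is analogous. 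The Lumer--Phillips theorem then yields the semigroup statement. There is no real obstacle here: the only subtlety is verifying that the spectral decomposition of the self-adjoint operator $B$ commutes with multiplication by the constant $\e^{-\im\theta}$, which is immediate, so that the functional calculus may be applied to $B_\theta$ on each invariant subspace separately.
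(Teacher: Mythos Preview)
Your proof is correct and follows essentially the same approach as the paper: both compute $\Im(u\,|\,B_\theta u)_\en = -\sin\theta\,(u\,|\,Bu)_\en$ using the self-adjointness of $B$ and the sign of $B$ on $\mathcal{H}_\en^{(\pm)}$, then deduce maximality from the fact that $\spec\bigl(B\restriction_{\mathcal{H}_\en^{(\pm)}}\bigr)\subset\pm[0,\infty)$ so that $\e^{-\im\theta}$ rotates the spectrum into the closed lower (resp.\ upper) half-plane. Your presentation is slightly more explicit about invoking the functional calculus, but the argument is the same.
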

\begin{proof}
  We calculate
  \begin{equation*}
    \pm \Im \left( \Pi^{(\pm)} u \,\middle|\, B_\theta \Pi^{(\pm)} u \right)_\en
    = \mp \sin \theta \left( \Pi^{(\pm)} u \,\middle|\, B \Pi^{(\pm)} u \right)_\en
    \leq 0
  \end{equation*}
  for $\theta \in [0, \uppi]$ and thus $\pm B_\theta$ are dissipative.
  To see whether $\pm B_\theta$ are even maximally dissipative, we check that
  that the range of $\pm \e^{-\im \theta} B - \zeta$ is dense in $\mathcal{H}_\en^{(\pm)}$ for $\Im \zeta > 0$.
  Since the spectrum of $B$ restricted to $\mathcal{H}_\en^{(\pm)}$ does not include $\pm \e^{\im \theta} \zeta$, this is automatic.
\end{proof}

Therefore
\begin{equation*}
  \e^{-\im (t - s) B_\theta} \Pi^{(\pm)},
  \quad\text{for $\pm t \geq \pm s$},
\end{equation*}
are bounded (and even exponentially decaying) on $\mathcal{H}_\en$ and we may define a Wick rotated analog of the Feynman propagator:
\begin{equation*}
  E_\theta^\Feyn(t - s) \defn \theta(t - s)\, \e^{-\im (t - s) B_\theta} \Pi^{(+)} - \theta(s - t)\, \e^{-\im (t - s) B_\theta} \Pi^{(-)}.
\end{equation*}

Note that, as $\theta \!\searrow\! 0$,
the Wick rotated Feynman propagator converges strongly to the unrotated propagator:

\begin{proposition}
  We have
  \begin{equation*}
    E^\Feyn = \slim_{\theta \searrow 0} E_\theta^\Feyn,
  \end{equation*}
  in the sense of operators from $\jnorm{t}^{-s} L^2(\RR; \mathcal{H}_\en)$ to $\jnorm{t}^s L^2(\RR; \mathcal{H}_\en)$ for $s > \frac12$.
\end{proposition}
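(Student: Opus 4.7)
The plan is to mirror the proof of Proposition~\ref{prop:slim_E_z} almost verbatim, with the bounded perturbation $-zZ$ there replaced by the (unbounded) Wick-rotation difference $B_\theta - B = (\e^{-\im\theta} - 1) B$. The overall structure will be: first establish pointwise (in $t$) strong convergence $E_\theta^\Feyn(t) \to E^\Feyn(t)$ on $\mathcal{H}_\en$ for each $t \neq 0$; then upgrade this to the claimed convergence in the weighted Hilbert spaces by invoking Lebesgue's dominated convergence theorem, together with the uniform contraction bound $\norm{E_\theta^\Feyn(t)}_\en \leq 1$ supplied by the preceding proposition and the embeddings $\jnorm{t}^{-s} L^2(\RR; \mathcal{H}_\en) \subset L^1(\RR; \mathcal{H}_\en)$ and $C_{\mathrm b}(\RR; \mathcal{H}_\en) \subset \jnorm{t}^s L^2(\RR; \mathcal{H}_\en)$ for $s > \tfrac{1}{2}$.

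For the pointwise convergence I will work first on the dense subspace $\Dom B \subset \mathcal{H}_\en$. Fixing $u \in \Dom B \cap \mathcal{H}_\en^{(+)}$ and $t > 0$, both $\e^{-\im t B_\theta}$ and $\e^{-\im t B}$ are contractions on $\mathcal{H}_\en^{(+)}$ that commute with $B$, so differentiating $s \mapsto \e^{-\im(t-s) B_\theta} \e^{-\im s B} u$ on $[0,t]$ and integrating yields the Duhamel-type identity
\begin{equation*}
  \e^{-\im t B_\theta} u - \e^{-\im t B} u = -\im (\e^{-\im\theta} - 1) \int_0^t \e^{-\im(t-s) B_\theta} \e^{-\im s B} B u\, \dif s,
\end{equation*}
from which the trivial estimate
\begin{equation*}
  \norm{E_\theta^\Feyn(t) u - E^\Feyn(t) u}_\en \leq \abs{\e^{-\im\theta} - 1} \cdot t \cdot \norm{B u}_\en
\end{equation*}
follows. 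An analogous bound will hold for $u \in \Dom B \cap \mathcal{H}_\en^{(-)}$ and $t < 0$. Decomposing arbitrary $u$ as $\Pi^{(+)} u + \Pi^{(-)} u$, and combining the above with the uniform contraction bound and the density of $\Dom B$, this gives pointwise strong convergence on all of $\mathcal{H}_\en$, uniformly for $t$ in bounded subsets of $(-\infty,0)$ and of $(0,\infty)$.

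The final step is routine and essentially identical to its counterpart in Proposition~\ref{prop:slim_E_z}: dominated convergence converts the pointwise (in $t$) strong convergence into $\norm{E_\theta^\Feyn f - E^\Feyn f}_{C_{\mathrm b}(\RR; \mathcal{H}_\en)} \to 0$ for every $f \in L^1(\RR; \mathcal{H}_\en)$, and the weighted-space embeddings recalled above close the argument. The only genuinely novel subtlety, compared with Proposition~\ref{prop:slim_E_z}, is that $B_\theta - B$ is no longer bounded, so convergence cannot be obtained uniformly on $\mathcal{H}_\en$; this is precisely what forces the density argument on $\Dom B$ and is really the only step calling for any care.
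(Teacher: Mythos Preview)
Your proposal is correct and follows exactly the approach the paper intends: its proof consists of the single sentence ``This may be shown in a similar way as Prop.~\ref{prop:slim_E_z}.'' You have carried out that analogy explicitly and correctly identified the only new wrinkle, namely that $B_\theta - B = (\e^{-\im\theta}-1)B$ is unbounded, so the Duhamel bound involves $\norm{Bu}_\en$ rather than $\norm{u}_\en$ and one must pass through the density of $\Dom B$ (using the uniform contraction bound) before concluding.
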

\begin{proof}
  This may be shown in a similar way as Prop.~\ref{prop:slim_E_z}.
\end{proof}

As a consequence we have the corresponding convergence for the Feynman propagator of $\KG$:

\begin{theorem}
  We have
  \begin{equation*}
    G^\Feyn = \slim_{\theta \searrow 0} G_\theta^\Feyn
  \end{equation*}
  in the sense of operators from $\jnorm{t}^{-s} L^2(M)$ to $\jnorm{t}^s L^2(M)$ for $s > \frac12$.
\end{theorem}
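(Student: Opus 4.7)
The plan is to reduce the claim to the preceding proposition on the strong convergence of $E_\theta^\Feyn$ by invoking the correspondence between propagators of $\partial_t + \im B$ and of $\KG$ provided by Proposition~\ref{prop:E-vs-G}. The Wick rotated propagator $G_\theta^\Feyn$ is tacitly defined in analogy with $G^\Feyn$, via the same formula applied to $E_\theta^\Feyn$:
\begin{equation*}
G_\theta^\Feyn = -\im \beta^{\frac12} \pi_2 Q E_\theta^\Feyn \iota_2 \beta^{\frac12}.
\end{equation*}
It will therefore suffice to verify that the outer factors extend to fixed bounded operators between the appropriate weighted Hilbert spaces, because strong operator convergence is preserved under left and right composition with such operators.

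First I would check that $\iota_2 \beta^{\frac12}$ maps $\jnorm{t}^{-s} L^2(M)$ boundedly into $\jnorm{t}^{-s} L^2(\RR; \mathcal{H}_\en)$. Pointwise in $t$, the map $\iota_2$ sends $u \in L^2(\Sigma)$ to $(0, u) \in (\Dom L^{\frac12}) \oplus L^2(\Sigma) = \mathcal{H}_\en$, which is an isometric embedding onto the second summand. Multiplication by $\beta^{\frac12}$ is a bounded isomorphism of $L^2(\Sigma)$ by Assumption~\ref{asm:spacetime}, and since $\beta$ is time-independent it commutes with the time weight $\jnorm{t}^{-s}$. Analogously, $\beta^{\frac12} \pi_2 Q$ is bounded from $\jnorm{t}^s L^2(\RR; \mathcal{H}_\en)$ to $\jnorm{t}^s L^2(M)$; the only additional input is that $\pi_2 Q(u_1, u_2) = u_1$ lies in $\Dom L^{\frac12}$, which embeds continuously into $L^2(\Sigma)$.

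With these boundedness assertions in hand, the conclusion is immediate: for any $f \in \jnorm{t}^{-s} L^2(M)$, set $h \defn \iota_2 \beta^{\frac12} f$ and apply the preceding proposition to obtain $E_\theta^\Feyn h \to E^\Feyn h$ in $\jnorm{t}^s L^2(\RR; \mathcal{H}_\en)$ as $\theta \searrow 0$; applying the bounded operator $-\im \beta^{\frac12} \pi_2 Q$ then yields $G_\theta^\Feyn f \to G^\Feyn f$ in $\jnorm{t}^s L^2(M)$. I do not foresee a serious obstacle here; the only point of care is the bookkeeping of the bounded factorisation of $\iota_2$ and $\pi_2 Q$ through the energy space, which is purely a consequence of the algebraic description $\mathcal{H}_\en = (\Dom L^{\frac12}) \oplus L^2(\Sigma)$ together with the uniform boundedness of $\beta$ and $\beta^{-1}$.
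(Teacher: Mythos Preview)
Your proposal is correct and matches the paper's approach exactly: the paper gives no proof beyond the phrase ``As a consequence we have\dots'', and you have supplied precisely the natural factorisation argument through $G^\bullet = -\im\beta^{1/2}\pi_2 Q E^\bullet \iota_2 \beta^{1/2}$ that this phrase invites. One minor point of care: your justification for the boundedness of $\pi_2 Q : \mathcal{H}_\en \to L^2(\Sigma)$ via ``$u_1 \in \Dom L^{1/2}$ embeds into $L^2(\Sigma)$'' is slightly elliptical, since what is actually needed is that the energy norm controls $\norm{u_1}_{L^2(\Sigma)}$; this follows directly from the standing gap assumption $H \geq C > 0$ of Section~\ref{sec:limiting-absorption} (giving $\norm{(u_1,u_2)}_\en^2 \geq C\norm{u_1}^2$), which remains in force here.
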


\begin{remark}
  Note that the Feynman propagator is distinguished by the fact that it can be Wick rotated.
  Wick rotated versions of the positive and negative frequency bisolutions $E^{(\pm)}$ (resp. $G^{(\pm)}$), for example, cannot be defined as bounded operators using the methods described above.
  The obstruction is that $\e^{-\im t B_\theta} \Pi^{(\pm)}$ are contractive \emph{semi}groups but not groups (\ie, we are restricted to $\pm t \geq 0$).
\end{remark}

\begin{acknowledgments}
  We would like to thank Christian Gérard, András Vasy and Michał\linebreak Wrochna for useful discussions.
  We also thank Bernard Kay for pointing out to us the work of Rumpf~\cite{rumpf1,rumpf2}.
  The work of D.S. was supported by a grant of the Polish National Science Center (NCN) based on the decision no. DEC-2015/16/S/ST1/00473.
  The work of J.D. was supported by the National Science Center under the grant UMO-2014/15/B/ST1/00126.
\end{acknowledgments}

\appendix
\section{A few theorems}

\begin{lemma}\label{lem:rel_selfadj}
  Let $\mathcal{H}$ be a Hilbert space and $\mathcal{D} \subset \mathcal{H}$ a dense subset.
  Suppose that $T : \mathcal{D} \to \mathcal{H}$ is essentially self-adjoint on $\mathcal{D}$, and $S : \mathcal{H} \to \mathcal{H}$ is bounded and boundedly invertible.
  Then $S^* T S$ is essentially self-adjoint on $S^{-1} \mathcal{D}$.
\end{lemma}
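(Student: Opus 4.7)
The plan is to identify the closure of $A \defn S^*TS$ on $S^{-1}\mathcal{D}$ with the candidate operator $B \defn S^*T^\cl S$ on $S^{-1}\Dom T^\cl$, and show directly that $B$ is self-adjoint. Essential self-adjointness of $A$ then follows.

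First I would dispatch the bookkeeping: $S^{-1}$ is bounded (since $S$ is boundedly invertible), so $S^{-1}$ is a homeomorphism of $\mathcal{H}$, and $S^{-1}\mathcal{D}$ is dense; the same holds for $S^{-1}\Dom T^\cl$. Symmetry of $A$ on $S^{-1}\mathcal{D}$ is immediate from $(S^*TSu\,|\,v) = (TSu\,|\,Sv) = (Su\,|\,TSv) = (u\,|\,S^*TSv)$ using symmetry of $T$ on $\mathcal{D}$; the same calculation gives symmetry of $B$.

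The main step is to show $B = B^*$. Suppose $v \in \Dom B^*$ with $B^*v = w$. Then for every $u \in S^{-1}\Dom T^\cl$, writing $u' \defn Su$ which ranges over $\Dom T^\cl$,
\begin{equation*}
  (T^\cl u'\,|\,Sv) = (Bu\,|\,v) = (u\,|\,w) = (S^{-1}u'\,|\,w) = (u'\,|\,(S^*)^{-1}w).
\end{equation*}
Since this holds for all $u' \in \Dom T^\cl$, we conclude $Sv \in \Dom (T^\cl)^* = \Dom T^\cl$ (using self-adjointness of $T^\cl$) and $T^\cl Sv = (S^*)^{-1}w$, \ie, $v \in S^{-1}\Dom T^\cl = \Dom B$ and $w = S^*T^\cl Sv = Bv$. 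Combined with the symmetry $B \subset B^*$, this gives $B = B^*$.

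Finally I would verify that $A^\cl = B$, which combined with self-adjointness of $B$ gives essential self-adjointness of $A$. Since $B$ is closed and $A \subset B$, we have $A^\cl \subset B$. Conversely, let $v \in \Dom B$, so $Sv \in \Dom T^\cl$. Because $\mathcal{D}$ is a core for $T^\cl$, there exist $u_n \in \mathcal{D}$ with $u_n \to Sv$ and $Tu_n \to T^\cl Sv$. Setting $v_n \defn S^{-1}u_n \in S^{-1}\mathcal{D}$, boundedness of $S^{-1}$ gives $v_n \to v$, and boundedness of $S^*$ gives $Av_n = S^*Tu_n \to S^*T^\cl Sv = Bv$, so $v \in \Dom A^\cl$. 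The only genuinely delicate point is the adjoint computation for $B$; the rest is routine manipulation made possible by the bounded invertibility of $S$ (which ensures all the domain transports $S^{-1}(\cdot)$ behave as set-theoretic bijections preserving density and graph-norm approximation).
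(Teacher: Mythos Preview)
The paper states this lemma without proof in an appendix of auxiliary facts, so there is no argument to compare against. Your proof is correct: the identification $A^\cl = S^*T^\cl S$ via the core property, together with the direct verification that $(S^*T^\cl S)^* = S^*T^\cl S$ using $(T^\cl)^* = T^\cl$ and $(S^{-1})^* = (S^*)^{-1}$, is exactly the standard way to establish this.
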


\begin{theorem}[see \eg~{\cite[Chap. VIII.10]{reed-1}}]\label{thm:tens_prod_op}
  Let $\mathcal{H}_1$, $\mathcal{H}_2$ be Hilbert spaces and $T_1$, $T_2$ densely defined operators on $\mathcal{H}_1$ and $\mathcal{H}_2$.
  Suppose that $T_1$ and $T_2$ are essentially self-adjoint on $\Dom T_1$ and $\Dom T_2$.
  Then $T = T_1 \otimes \one + \one \otimes T_2$ is essentially self-adjoint on the algebraic tensor product of the domains, $\Dom T_1 \aotimes \Dom T_2$.
\end{theorem}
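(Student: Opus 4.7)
The plan is to build a self-adjoint extension $A$ of $T$ by means of Stone's theorem applied to a tensor product of unitary groups, and then verify that $\Dom T_1 \aotimes \Dom T_2$ is a core for $A$.

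First I would note that essential self-adjointness of $T_j$ means the closures $\bar T_j$ are self-adjoint, so by Stone's theorem they generate strongly continuous unitary groups $U_j(t) \defn \e^{\im t \bar T_j}$ on $\mathcal{H}_j$. I would then define $U(t) \defn U_1(t) \otimes U_2(t)$ on $\mathcal{H}_1 \otimes \mathcal{H}_2$; using uniform unitary boundedness of each factor together with strong continuity, $U$ is readily seen to be a strongly continuous one-parameter unitary group, so Stone's theorem produces a self-adjoint generator $A$ with $U(t) = \e^{\im t A}$.

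Next I would identify $A$ with $T$ on elementary tensors. For $u_j \in \Dom T_j$, the two-variable product-rule identity
\begin{equation*}
  \frac{U(t)(u_1 \otimes u_2) - u_1 \otimes u_2}{t} = \frac{U_1(t) u_1 - u_1}{t} \otimes U_2(t) u_2 + u_1 \otimes \frac{U_2(t) u_2 - u_2}{t}
\end{equation*}
converges as $t \to 0$ (one term uses Stone's theorem and the other uses strong continuity of $U_2$) to $\bar T_1 u_1 \otimes u_2 + u_1 \otimes \bar T_2 u_2 = T(u_1 \otimes u_2)$. Hence $u_1 \otimes u_2 \in \Dom A$ with $A(u_1 \otimes u_2) = T(u_1 \otimes u_2)$, and by linearity $T \subset A$ on $\Dom T_1 \aotimes \Dom T_2$.

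The main obstacle is showing that $\mathcal{D} \defn \Dom T_1 \aotimes \Dom T_2$ is in fact a core for $A$, for then $\bar T|_\mathcal{D} = A$ is self-adjoint. I would first introduce the intermediate subspace $\mathcal{D}' \defn \Dom \bar T_1 \aotimes \Dom \bar T_2$ and establish that $\mathcal{D}'$ is a core for $A$. This can be done by taking $e_j = E_j([-n,n]) f_j$, where $E_j$ is the spectral measure of $\bar T_j$ and $f_j \in \mathcal{H}_j$; these $e_j$ lie in $\Dom \bar T_j^k$ for all $k$, and the same product-rule argument as above shows $e_1 \otimes e_2 \in \Dom A^k$ for all $k$ with the expected action; by a standard spectral-projection estimate such vectors are entire analytic vectors for $A$, and their span is dense in $\mathcal{H}_1 \otimes \mathcal{H}_2$, so by Nelson's analytic vector theorem they form a core for $A$, and hence $\mathcal{D}'$ does as well.

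Finally I would bridge from $\mathcal{D}'$ back to $\mathcal{D}$. Given $e_1 \otimes e_2 \in \mathcal{D}'$, the hypothesis that $T_j$ is essentially self-adjoint on $\Dom T_j$ means $\Dom T_j$ is a core for $\bar T_j$; choose $u_j^{(k)} \in \Dom T_j$ with $u_j^{(k)} \to e_j$ in the graph norm of $\bar T_j$. Then $u_1^{(k)} \otimes u_2^{(k)} \in \mathcal{D}$ approximates $e_1 \otimes e_2$ in $\mathcal{H}_1 \otimes \mathcal{H}_2$, and the identity $A(u_1 \otimes u_2) - A(e_1 \otimes e_2) = \bar T_1(u_1-e_1)\otimes u_2 + e_1 \otimes \bar T_2(u_2 - e_2) + \bar T_1 e_1 \otimes (u_2 - e_2) + (u_1 - e_1) \otimes \bar T_2 e_2$ shows convergence also in the graph norm of $A$. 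Hence $\mathcal{D}$ is graph-dense in $\mathcal{D}'$, which is a core for $A$, so $\mathcal{D}$ is a core for $A$ and $T$ is essentially self-adjoint on $\mathcal{D}$.
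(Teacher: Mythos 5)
The paper offers no proof of this theorem, merely citing Reed--Simon, so there is no in-paper argument to compare against; your proof is the standard one from that source (Stone's theorem applied to $U_1(t)\otimes U_2(t)$, spectral truncations to produce a total set of entire analytic vectors, Nelson's theorem to identify $\Dom\bar T_1\aotimes\Dom\bar T_2$ as a core, then graph-norm approximation to descend to $\Dom T_1\aotimes\Dom T_2$), and it is correct. One small correction: the four summands in your final display do not add up to $A(u_1\otimes u_2)-A(e_1\otimes e_2)$; a correct telescoping is
\begin{equation*}
  \bar T_1(u_1-e_1)\otimes u_2 + \bar T_1 e_1\otimes(u_2-e_2) + (u_1-e_1)\otimes\bar T_2 u_2 + e_1\otimes\bar T_2(u_2-e_2),
\end{equation*}
which still tends to zero since $\norm{\bar T_2 u_2^{(k)}}$ stays bounded along the graph-convergent sequence, so the conclusion is unaffected.
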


\begin{theorem}[Nelson's Commutator Theorem, see \eg~\cite{faris}]\label{thm:nelson}
  Let $T$ be a Hermitian operator and $N \geq 0$ a positive self-adjoint operator.
  Let $\mathcal{C}$ be a core for $N$ such that $\mathcal{C} \subset \Dom T$.
  Assume that the following two estimates hold:
  \begin{enumerate}
    \item[\normalfont(i)] $\norm{T f} \leq a \norm{N f} + b \norm{f}$\; for $f \in \mathcal{C}$,
    \item[\normalfont(ii)] $\pm\im [T, N] \leq c N$\; as quadratic forms on $\mathcal{C}$.
  \end{enumerate}
  Then $T$ is essentially self-adjoint on $\mathcal{C}$.
\end{theorem}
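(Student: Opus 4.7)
The plan is to transfer the strong convergence $E^\Feyn_\theta \to E^\Feyn$ from the previous proposition through the algebraic relation of Prop.~\ref{prop:E-vs-G}. Concretely, both $G^\Feyn_\theta$ and $G^\Feyn$ can be written as
\begin{equation*}
  G^\Feyn_\theta = -\im \beta^{\frac12} \pi_2 Q\, E^\Feyn_\theta\, \iota_2 \beta^{\frac12},
  \qquad
  G^\Feyn   = -\im \beta^{\frac12} \pi_2 Q\, E^\Feyn\,   \iota_2 \beta^{\frac12}.
\end{equation*}
So it suffices to check that the ``sandwich'' operators $\iota_2 \beta^{\frac12}$ and $\beta^{\frac12} \pi_2 Q$ are bounded between the appropriate weighted spaces, because composition with bounded operators preserves strong limits.

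First I would verify the boundedness of $\iota_2 \colon L^2(\Sigma) \to \mathcal{H}_\en$: this is in fact an isometry since $\|(0,u)\|_\en^2 = (u\,|\,u)$. Consequently $\iota_2$ lifts to a bounded map $\jnorm{t}^{-s} L^2(M) \to \jnorm{t}^{-s} L^2(\RR; \mathcal{H}_\en)$. Next I would check that $\pi_2 Q \colon \mathcal{H}_\en \to L^2(\Sigma)$ is bounded. Since $\pi_2 Q (u_1,u_2) = u_1$, this amounts to the estimate $\|u_1\|_{L^2(\Sigma)} \lesssim \|(u_1,u_2)\|_\en$; this follows from the standing strengthened positivity assumption $H \geq C > 0$, which via Prop.~\ref{prop:H_pos} gives $L - V^2 \geq C' > 0$ and thus $L \geq C'$, so that $(u_1\,|\,L u_1) \geq C' \|u_1\|^2$ and $\|u_1\|^2 \leq (C')^{-1} \|(u_1,u_2)\|_\en^2$. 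The multiplication by $\beta^{\frac12}$ is bounded on all the weighted $L^2$ spaces involved by the uniform bounds $C \leq \beta \leq C^{-1}$ of Assumption~\ref{asm:spacetime}.

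With these mapping properties in place the proof closes quickly: given $f \in \jnorm{t}^{-s} L^2(M)$, the element $\iota_2 \beta^{\frac12} f$ lies in $\jnorm{t}^{-s} L^2(\RR; \mathcal{H}_\en)$, so the previous proposition gives
\begin{equation*}
  E^\Feyn_\theta\, \iota_2 \beta^{\frac12} f \;\longrightarrow\; E^\Feyn\, \iota_2 \beta^{\frac12} f
  \quad \text{in } \jnorm{t}^{s} L^2(\RR; \mathcal{H}_\en)
\end{equation*}
as $\theta \searrow 0$, and then applying the bounded operator $-\im \beta^{\frac12} \pi_2 Q$ on the left yields the claimed strong convergence in $\jnorm{t}^{s} L^2(M)$.

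I expect no serious obstacle here: the content of the theorem is really in the previous proposition, and the only nontrivial bookkeeping is verifying that $\pi_2 Q$ is bounded from $\mathcal{H}_\en$ to $L^2(\Sigma)$, which requires invoking the strengthened positivity assumption $H \geq C > 0$ (without it, the first coordinate of an energy-space vector is only controlled in the $L^{1/2}$-form norm, and the argument would need mild modification).
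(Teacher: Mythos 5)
Your proposal does not address the statement at all. The statement to prove is Nelson's Commutator Theorem (Thm.~\ref{thm:nelson}): given a Hermitian operator $T$, a positive self-adjoint comparison operator $N$, a common core $\mathcal{C}$, a relative bound $\norm{Tf}\leq a\norm{Nf}+b\norm{f}$, and a commutator estimate $\pm\im[T,N]\leq cN$ as quadratic forms, conclude that $T$ is essentially self-adjoint on $\mathcal{C}$. That is an abstract functional-analytic statement, stated in the appendix and used in the proof of Thm.~\ref{thm:kg_selfadj}; the paper does not prove it but cites~\cite{faris}. A proof would typically proceed by a Grönwall-type differential inequality on $t\mapsto\norm{N^{1/2}\e^{\im t N}f}^2$ together with the relative bound to show that $\Ran(T\pm\im)$ is dense, or equivalently that deficiency indices vanish.

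What you have written instead is a proof of the final theorem of Section~8, namely that $G^\Feyn=\slim_{\theta\searrow 0}G^\Feyn_\theta$ in the weighted spaces, obtained by conjugating the convergence $E^\Feyn_\theta\to E^\Feyn$ with the bounded maps $\iota_2\beta^{1/2}$ and $\beta^{1/2}\pi_2 Q$ via Prop.~\ref{prop:E-vs-G}. That argument is fine as far as it goes for that other theorem (and your check that $\pi_2 Q:\mathcal{H}_\en\to L^2(\Sigma)$ is bounded using $H\geq C>0$ is correct), but it simply has no bearing on Nelson's Commutator Theorem: it involves none of $T$, $N$, the commutator condition, or essential self-adjointness. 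You need to start over with the actual abstract statement.
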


\small

\end{document}